\theoremstyle{plain} 
\newtheorem{definition}{Definition}
\newtheorem{proposition}[definition]{Proposition}
\newtheorem{corollary}[definition]{Corollary}
\numberwithin{equation}{section}
\begin{document}

\title{Matrix-valued Boltzmann Equation for the Hubbard Chain}

\author{Martin L.R. F\"urst$^\text{1,2,a}$, Christian B. Mendl$^\text{1,b}$, Herbert Spohn$^\text{1,3,c}$}

\footnotetext[1]{Mathematics Department, Technische Universit\"at M\"unchen, Boltzmannstra{\ss}e 3, 85748 Garching, Germany}
\footnotetext[2]{Excellence Cluster Universe, Technische Universit\"at M\"unchen, Boltzmannstra{\ss}e 2, 85748 Garching bei M\"unchen, Germany}
\footnotetext[3]{Physics Department, Technische Universit\"at M\"unchen, James-Franck-Stra{\ss}e 1, 85748 Garching bei M\"unchen, Germany}

\renewcommand*{\thefootnote}{\alph{footnote}}

\footnotetext[1]{\href{mailto:mfuerst@ma.tum.de}{mfuerst@ma.tum.de},
$^\text{b}$\href{mailto:mendl@ma.tum.de}{mendl@ma.tum.de},
$^\text{c}$\href{mailto:spohn@ma.tum.de}{spohn@ma.tum.de}}

\renewcommand*{\thefootnote}{\arabic{footnote}}

\date{\today}

\maketitle

\begin{abstract}
\noindent We study, both analytically and numerically, the Boltzmann transport equation for the Hubbard chain with nearest neighbor hopping and spatially homogeneous initial condition. The time-dependent Wigner function is matrix-valued because of spin. The H-theorem holds. The nearest neighbor chain is integrable which, on the kinetic level, is reflected by infinitely many additional conservation laws and linked to the fact that there are also non-thermal stationary states. We characterize all stationary solutions. Numerically, we observe an exponentially fast convergence to stationarity and investigate the convergence rate in dependence on the initial conditions.
\end{abstract}

\tableofcontents

\section{Introduction}

The Hubbard model is a simplified description of interacting electrons moving in a periodic background potential, see~\cite{Essler2005,Fehske2008,Rasetti1991} for introductory literature. We are interested in the dynamics of the Hubbard model in the regime of small interactions, which is conveniently described by kinetic theory, following the pioneering work of Peierls~\cite{Peierls1929}, Nordheim~\cite{Nordheim1928}, and Uehling, Uhlenbeck~\cite{UehlingUhlenbeck1933}. From the point of view of kinetic theory the Hubbard model has unusual features. The non-interacting model has a doubly degenerate band, which -- because of spin -- makes the Wigner function $2 \times 2$ matrix-valued. In addition the hamiltonian is invariant under a global SU(2) spin rotations. On the kinetic level this property is reflected by an exceptionally large set of conserved quantities. We refer to~\cite{FermionicTransport2012} for a recent experimental realization through ultracold atoms in an optical lattice under conditions where also kinetic theory is applied. 

As one would expect, even the matrix-valued Boltzmann equation satisfies the H-theorem. The goal of our note is to achieve -- beyond mere entropy increase -- a quantitative and more detailed understanding of the approach to stationarity. The Boltzmann equation consists of the sum of an effective hamiltonian plus a dissipative collision term, both with cubic nonlinearity. At such generality, numerical simulation is not an easy task. Therefore we concentrate on the Hubbard chain with nearest neighbor hopping and on-site interaction. In addition we assume spatial homogeneity. Our simulations use $64$ grid points in momentum space, which still allows for easy exploration. At this stage the reader might wonder, why on the kinetic level in one dimension there are any collisions at all. This will be explained in due course, as well as the difference between the nearest neighbor integrable model and the non-integrable next nearest neighbor case.

Let us start with the underlying hamiltonian and the resulting kinetic equation. The electrons are described by a spin-$\tfrac{1}{2}$ Fermi field on $\mathbb{Z}$ with creation/annihilation operators satisfying the anticommutation relations
\begin{equation}
\{ a_\sigma^*(x), a_\tau(y) \} = \delta_{xy} \delta_{\sigma \tau}, \qquad \{ a_\sigma(x), a_\tau(y) \} = 0, \qquad \{ a_\sigma^*(x), a_\tau^*(y) \} = 0
\end{equation}
for $x, y \in \mathbb{Z}$, $\sigma, \tau \in \{\uparrow, \downarrow\}$, and $\{ A, B \} = A B + B A$. The hamiltonian reads
\begin{equation}
H = \sum_{x, y \in \mathbb{Z}} \alpha(x - y) \, a^*(x) \cdot a(y)
		+ \frac{\lambda}{2} \sum_{x \in \mathbb{Z}} \big( a^*(x) \cdot a(x) \big)^2.
\end{equation}
Here $a^*(x) \cdot a(x) = a_\uparrow^*(x) \, a_\uparrow(x) + a_\downarrow^*(x) \, a_\downarrow(x)$. $\alpha$ is the hopping amplitude, with the properties $\alpha(x) = \alpha(x)^*$, $\alpha(x) = \alpha(-x)$, and $\lambda$ is the strength of the on-site interaction. Our notation emphasizes the invariance under global spin rotations.

For the Fourier transformation we use the convention
\begin{equation}
\label{eq:Fouriertransformation}
\hat{f}(k) = \sum_{x \in \mathbb{Z}} f(x)\, \mathrm{e}^{-2 \pi i\, k \cdot x}.
\end{equation}
Then the first Brillouin zone is the interval $\mathbb{T} = [-\tfrac{1}{2}, \tfrac{1}{2}]$ with periodic boundary conditions. The dispersion relation $\omega(k) = \hat{\alpha}(k)$ and, up to a constant, in Fourier space $H$ can be written as
\begin{equation}
H = \sum_{\sigma \in \{\uparrow,\downarrow\}} \int_{\mathbb{T}} \mathrm{d} k \, \omega(k) \hat{a}^*_\sigma(k) \hat{a}_\sigma(k) 
	+ \lambda \int_{\mathbb{T}^4} \mathrm{d}^4 \boldsymbol{k} \, \delta(\underline{k}) \, 
		\hat{a}_\uparrow^*(k_1) \hat{a}_\uparrow^*(k_2) \hat{a}_\downarrow(k_3) \hat{a}_\downarrow(k_4)
\end{equation}
with $\underline{k} = k_1 + k_2 - k_3 - k_4 \mod 1$ and $\mathrm{d}^4 \boldsymbol{k} = \mathrm{d}k_1\, \mathrm{d}k_2\, \mathrm{d}k_3\, \mathrm{d}k_4$.

To arrive at the kinetic equation, we assume that the initial state of the chain is quasifree, gauge invariant, and invariant under spatial translations. It is thus completely characterized by the two-point function
\begin{equation} \label{eq:InitialCondition}
\langle \hat{a}_\sigma^*(k) \hat{a}_\tau(k') \rangle = \delta(k - k') W_{\sigma \tau}(k).
\end{equation}
It will be convenient to think of $W(k)$ as a $2 \times 2$ matrix for each $k \in \mathbb{T}$. Then, in general, $W(k_1) W(k_2) \neq W(k_2) W(k_1)$ and every argument of standard kinetic theory has to be reworked. By the Fermi property we have $0 \leq W(k) \leq 1$ as a matrix for each $k$. In particular, $W$ can be written as
\begin{equation}
W(k) = \sum_{\sigma \in \{\uparrow,\downarrow\}} \varepsilon_\sigma(k) \lvert k,\sigma\rangle \langle k,\sigma\rvert,
\end{equation}
where $\lvert k,\sigma\rangle$ for $\sigma \in \{\uparrow,\downarrow\}$ is a $k$-dependent basis in spin space $\mathbb{C}^2$ and $\varepsilon_\sigma$ are the eigenvalues with $0 \leq \varepsilon_\sigma \leq 1$.

At some later time $t$ the state is still gauge and translation invariant, hence necessarily
\begin{equation}
\langle a_\sigma^*(k,t) a_\tau(k',t) \rangle = \delta(k - k') W_{\sigma \tau}(k,t).	
\end{equation}
In general $W(t)$ is a complicated object, but for small coupling $\lambda$ the quasi-free property persists over a time scale of order $\lambda^{-2}$, a structure which allows one to obtain the kinetic equation by second order time-dependent perturbation theory. More details can be found, e.g., in~\cite{ErdosSalmhoferYau2004,NotToNormalOrder2009,MeiLukkarinenSpohnPrep}. Here we only write down the resulting Boltzmann equation
\begin{equation}
\label{eq:BoltzmannEquation}
\frac{\partial}{\partial t} W(k,t)
= \mathcal{C}_\mathrm{c}[W](k,t) + \mathcal{C}_\mathrm{d}[W](k,t) = \mathcal{C}[W](k,t),
\end{equation}
which has the structure of an evolution equation and has to be supplemented with the initial data $W(k,0) = W(k)$.

The first term is of Vlasov type,
\begin{equation}
\label{eq:Cc}
\mathcal{C}_\mathrm{c}[W](k,t) := - i\,[H_\mathrm{eff}(k,t), W(k,t)],
\end{equation}
where the effective hamiltonian $H_\mathrm{eff}(k,t)$ is a $2 \times 2$ matrix which itself depends on $W$. More explicitly,
\begin{multline}
\label{eq:Heff}
H_{\mathrm{eff},1} = \int_{\mathbb{T}^3} \mathrm{d}k_2 \mathrm{d}k_3 \mathrm{d}k_4 \, \delta(\underline{k}) \, \mathcal{P} \left(\tfrac{1}{\underline{\omega}}\right)\\
\times \big( W_3 W_4 - W_2 W_3 - W_3 W_2 - \mathrm{tr}[W_4] W_3 + \mathrm{tr}[W_2] W_3 + W_2 \big).
\end{multline}
Here and later on we use the shorthand $\tilde{W} = 1 - W$, $W_1 = W(k_1,t)$, $H_{\mathrm{eff},1} = H_\mathrm{eff}(k_1,t)$, $\underline{\omega} = \omega(k_1) + \omega(k_2) - \omega(k_3) - \omega(k_4)$. Since $W$ is $2 \times 2$ matrix-valued, $\mathrm{tr}[\,\cdot\,]$ is the trace in spin space. Finally $\mathcal{P}$ denotes the principal part. Since the $k_3$, $k_4$ integration can be interchanged, $H_{\mathrm{eff}} = H_{\mathrm{eff}}^*$, as it should be.

There are many different ways to write the collision term $\mathcal{C}_\mathrm{d}$. We choose a version which separates the various contributions into gain and loss term. Then
\begin{equation}
\label{eq:Cd}
\mathcal{C}_\mathrm{d}[W]_1 = \pi \int_{\mathbb{T}^3} \mathrm{d}k_2 \mathrm{d}k_3 \mathrm{d}k_4 \delta(\underline{k}) \delta(\underline{\omega}) \big( \mathcal{A}[W]_{1234} + \mathcal{A}[W]_{1234}^* \big),
\end{equation}
where the index $1234$ means that the matrix $\mathcal{A}[W]$ depends on $k_1$, $k_2$, $k_3$, and $k_4$. Explicitly
\begin{multline}
\mathcal{A}[W]_{1234} = - W_4 \tilde{W}_3 W_2 + W_4 \, \mathrm{tr}[ \tilde{W}_3 W_2 ] \\
- \big\{\tilde{W}_4 W_2 - \tilde{W}_4 W_3 - \tilde{W}_3 W_2 + \tilde{W}_4 \, \mathrm{tr}[ W_3 ] - \tilde{W}_4 \, \mathrm{tr}[ W_2 ] + \mathrm{tr}[ W_2 \tilde{W}_3 ] \big\} W_1
\end{multline}
with the first two summands the gain term and $\{ ... \} W_1$ the loss term. The gain term is always positive definite, as implied by the inequality
\begin{equation}
A \, \mathrm{tr}[B C] + C \, \mathrm{tr}[B A] - A B C - C B A \geq 0
\end{equation}
valid for arbitrary positive definite matrices $A, B, C$. Thus if an eigenvalue of $W(k,t)$ happens to vanish, the gain term pushes it back to values $> 0$. A similar argument can be made for $\tilde{W}(k,t)$, implying the propagation of the Fermi property~\cite{MeiLukkarinenSpohnPrep}, to say: if at $t = 0$ one has $0 \leq W(k) \leq 1$, then the solution to~\eqref{eq:BoltzmannEquation} also satisfies $0 \leq W(k,t) \leq 1$.

In our contribution, we report on a numerical solution of the kinetic equation~\eqref{eq:BoltzmannEquation}, emphasizing the approach to stationarity. To provide an outline, in Sec.~\ref{sec:Properties} we establish a few general properties of~\eqref{eq:BoltzmannEquation},~\eqref{eq:Heff},~\eqref{eq:Cd}. They hold for arbitrary $\omega$ and also for the obvious extension of~\eqref{eq:BoltzmannEquation} to $d$ dimensions. In particular, we show that the entropy production $\sigma[W] = \frac{d}{\mathrm{d}t} S[W]$ has the property $\sigma \geq 0$. The thermal state $W_\mathrm{FD}$ (the Fermi-Dirac distribution) satisfies $\mathcal{C}[W_\mathrm{FD}] = 0$ and hence also $\sigma[W_\mathrm{FD}] = 0$. But to list all stationary solutions of~\eqref{eq:BoltzmannEquation} is not an easy task in general.

In Sec.~\ref{sec:Hubbard1D} we restrict ourselves to the Hubbard chain with nearest neighbor hopping, i.e.,
\begin{equation}
\label{eq:omega}
\omega(k) = 1 - \cos(2 \pi k).	
\end{equation}
The first task is to discuss the kinematically allowed collisions, in other words the solutions of $\underline{\omega} = 0$ together with $\underline{k} = 0 \mod 1$. The nearest neighbor model has a special symmetry through which a large set of further stationary states, beyond the thermal ones, can be found. On the kinetic level, this reflects the integrability of the underlying quantum hamiltonian. In Sec.~\ref{sec:Numerics} our numerical procedure is outlined and in Sec.~\ref{sec:Results} it is used to study the dynamics for representative initial Wigner functions.

\section{General properties of the Hubbard kinetic equation}
\label{sec:Properties}

To emphasize generality, for this section only, we consider $\mathbb{Z}^d$ as underlying lattice. Hence $k_j \in \mathbb{T}^d$ with periodic boundary conditions. The SU(2) invariance of $H$ is reflected by 
\begin{equation}
\mathcal{C}[U^* W U] = U^* \mathcal{C}[W] U	
\end{equation}
for all $U \in \mathrm{SU(2)}$. Hence if $W(k,t)$ is a solution to~\eqref{eq:BoltzmannEquation}, so is $U^*\,W(k,t)\,U$. Also hermiticity is propagated in time, i.e., if $W(0) = W(0)^*$, then also $W(t) = W(t)^*$, which follows from
\begin{equation}
\mathcal{C}[W]^* = \mathcal{C}[W^*].
\end{equation}
Furthermore the Fermi property, $0 \leq W(t) \leq 1$, is propagated in time, see~\cite{MeiLukkarinenSpohnPrep} for details.

There are two conservation laws
\begin{itemize}
\item spin
\begin{equation}
\label{eq:SpinConservation}
\frac{\mathrm{d}}{\mathrm{d} t} \int_{\mathbb{T}^d} \mathrm{d}k \, W(k,t) = 0,
\end{equation}
\item energy
\begin{equation}
\label{eq:EnergyConservation}
\frac{\mathrm{d}}{\mathrm{d} t} \int_{\mathbb{T}^d} \mathrm{d}k \, \omega(k) \, \mathrm{tr}[W(k,t)] = 0.
\end{equation} 
\end{itemize}
The proof uses the symmetrization of the integrand. One can interchange the variables $k_1 \leftrightarrow k_2$, $k_3 \leftrightarrow k_4$ and also the pairs $\{k_1, k_2\} \leftrightarrow \{k_3,k_4\}$. For the energy, one then picks up the integrand $\underline{\omega}$ and hence the factor $\underline{\omega}\,\delta(\underline{\omega}) = 0$.


The next general property is the H-theorem. Since $\lvert\lambda\rvert \ll 1$, locally the state is free fermion. On the kinetic level, the entropy of the state $W$ is then defined as
\begin{equation}
S[W] = - \int_{\mathbb{T}^d} \mathrm{d} k_1 \big( \mathrm{tr}[W_1 \log W_1] + \mathrm{tr}[\tilde{W}_1 \log \tilde{W}_1] \big).
\end{equation}
Hence the entropy production is given by
\begin{equation}
\label{eq:sigmaW}
\sigma[W] = \frac{\mathrm{d}}{\mathrm{d} t} S[W] = -\int_{\mathbb{T}^d} \mathrm{d}k_1 \, \mathrm{tr}[ (\log W_1 - \log \tilde{W}_1) \, \mathcal{C}[W]_1 ].
\end{equation}
The H-theorem asserts that
\begin{equation}
\label{eq:HTheorem}
\sigma[W] \geq 0 \qquad \text{for all } W \text{ with } 0 \leq W \leq 1.   
\end{equation}
To establish~\eqref{eq:HTheorem}, for each $k$ we write
\begin{equation}
\label{eq:WEigenDecomp}
W(k) = \sum_{\sigma \in \{ \uparrow, \downarrow \} } \varepsilon_\sigma(k) P_\sigma(k)	
\end{equation}
with eigenvalues $0 \leq \varepsilon_\sigma(k) \leq 1$ and orthogonal eigen-projections $P_\sigma(k) = \lvert k,\sigma \rangle \langle k,\sigma\rvert$ with $\langle k,\sigma | k, \sigma' \rangle = \delta_{\sigma \sigma'}$. As before, we use a shorthand as $P_j = P_{\sigma_j}(k_j)$, $\varepsilon_j = \varepsilon_{\sigma_j}(k_j)$ and $\sum_{\boldsymbol{\sigma}} = \sum_{\sigma_1, \sigma_2, \sigma_3, \sigma_4}$. Inserting~\eqref{eq:WEigenDecomp} into~\eqref{eq:sigmaW}, one obtains
\begin{equation}
\begin{split}
\sigma[W] &= \pi \int_{(\mathbb{T}^d)^4} \mathrm{d}^4 \boldsymbol{k} \, \delta(\underline{k}) \delta(\underline{\omega}) \sum_{\boldsymbol{\sigma}}
	\big( \log\varepsilon_1 - \log\tilde{\varepsilon}_1 \big) 
	\big( \tilde{\varepsilon}_1 \tilde{\varepsilon}_2 \varepsilon_3 \varepsilon_4 - \varepsilon_1 \varepsilon_2 \tilde{\varepsilon}_3 \tilde{\varepsilon}_4 \big)\\
& \quad \times \big( \mathrm{tr}[P_1 P_3] \mathrm{tr}[P_2 P_4] + \mathrm{tr}[P_1 P_3] \mathrm{tr}[P_2 P_4] - \mathrm{tr}[P_1 P_3 P_2 P_4] - \mathrm{tr}[P_4 P_2 P_3 P_1] \big)\\
&= \pi \int_{(\mathbb{T}^d)^4} \mathrm{d}^4 \boldsymbol{k} \, \delta(\underline{k}) \delta(\underline{\omega}) \sum_{\boldsymbol{\sigma}}
	\big( \tilde{\varepsilon}_1 \tilde{\varepsilon}_2 \varepsilon_3 \varepsilon_4 - \varepsilon_1 \varepsilon_2 \tilde{\varepsilon}_3 \tilde{\varepsilon}_4 \big) \log(\tilde{\varepsilon_1}/\varepsilon_1)\\
& \quad \times \big\lvert \langle k_1,\sigma_1 | k_3,\sigma_3 \rangle \langle k_2,\sigma_2 | k_4,\sigma_4 \rangle - \langle k_1,\sigma_1 | k_4,\sigma_4 \rangle \langle k_2,\sigma_2 | k_3,\sigma_3 \rangle \big\rvert^2.
\end{split}
\end{equation}
We interchange $1 \leftrightarrow 2$, $3 \leftrightarrow 4$ and $(1,2) \leftrightarrow (3,4)$. Then
\begin{multline}
\label{eq:sigmaWfactorized}
\sigma[W]
= \frac{\pi}{4} \int_{(\mathbb{T}^d)^4} \mathrm{d}^4 \boldsymbol{k} \, \delta(\underline{k}) \delta(\underline{\omega}) \sum_{\boldsymbol{\sigma}} \left( \tilde{\varepsilon}_1 \tilde{\varepsilon}_2 \varepsilon_3 \varepsilon_4 - \varepsilon_1 \varepsilon_2 \tilde{\varepsilon}_3 \tilde{\varepsilon}_4 \right) \log\!\left( \frac{\tilde{\varepsilon}_1 \tilde{\varepsilon}_2 \varepsilon_3 \varepsilon_4}{\varepsilon_1 \varepsilon_2 \tilde{\varepsilon}_3 \tilde{\varepsilon}_4} \right) \\
\quad \times \big\lvert \langle k_1,\sigma_1 | k_3,\sigma_3 \rangle \langle k_2,\sigma_2 | k_4,\sigma_4 \rangle - \langle k_1,\sigma_1 | k_4,\sigma_4 \rangle \langle k_2,\sigma_2 | k_3,\sigma_3 \rangle \big\rvert^2
\geq 0,
\end{multline}
since $(x-y) \log(x/y) \geq 0$.

Stationary states are defined by
\begin{equation}
\mathcal{C}[W] = 0,
\end{equation}
which obviously implies $\sigma[W] = 0$. Physically one would expect thermal equilibrium to be included in the stationary states. On the kinetic level thermal equilibrium is defined by the Fermi-Dirac state
\begin{equation}
\label{eq:WFermiDirac}
W_\mathrm{FD}(k) = \sum_{\sigma \in \{\uparrow, \downarrow \}} \left(\mathrm{e}^{\beta (\omega(k) - \mu_\sigma)} + 1\right)^{-1} \lvert\sigma \rangle \langle \sigma\rvert,
\end{equation}
which is characterized by the inverse temperature $\beta$, the two chemical potentials $\mu_\uparrow$, $\mu_\downarrow$ for the spin occupations, and some $k$-independent spin basis $\lvert\sigma\rangle$, $\beta$, $\mu_\uparrow$, $\mu_\downarrow \in \mathbb{R}$. Indeed, it is easily checked that $\mathcal{C}[W_\mathrm{FD}] = 0$.

With this background information, one develops the following rough picture on the approach to stationarity. The initial state determines a special, $k$-independent basis $\lvert\sigma\rangle$ through
\begin{equation}
\label{eq:ConservedSigmaBasis}
\int_{\mathbb{T}^d} \mathrm{d}k \, W(k) = \sum_{\sigma \in \{\uparrow, \downarrow \}} \varepsilon_\sigma \, \lvert\sigma \rangle \langle \sigma\rvert.
\end{equation}
By~\eqref{eq:SpinConservation} this basis is preserved in time. Thus it is natural to expand $W(k,t)$ in this special basis. Approach to the thermal state would mean
\begin{equation}
\lim_{t \rightarrow \infty} \langle \sigma | W(k,t) | \sigma' \rangle = 0 \quad \text{for } \sigma \neq \sigma'
\end{equation}
and
\begin{equation}
\lim_{t \rightarrow \infty} \langle \sigma | W(k,t) | \sigma \rangle = \left(\mathrm{e}^{\beta (\omega(k) - \mu_\sigma)} + 1\right)^{-1} \quad \text{for } \sigma \in \{\uparrow, \downarrow \}.
\end{equation}
%
%
%
Since, by~\eqref{eq:SpinConservation}, the integral over the eigenvalue is conserved, one concludes that 
\begin{equation}
\varepsilon_\sigma = \int_{\mathbb{T}^d} \mathrm{d}k \left(\mathrm{e}^{\beta (\omega(k) - \mu_\sigma)} + 1\right)^{-1}, \quad \sigma \in \{\uparrow, \downarrow \}.
\end{equation}
Correspondingly, by~\eqref{eq:EnergyConservation}, for the average energy, 
\begin{equation}
\mathsf{e} = \int_{\mathbb{T}^d} \mathrm{d}k \, \omega(k) \, \mathrm{tr}[W(k)]
= \int_{\mathbb{T}^d} \mathrm{d}k \sum_{\sigma \in \{\uparrow, \downarrow \}} \omega(k) \left(\mathrm{e}^{\beta (\omega(k) - \mu_\sigma)} + 1\right)^{-1}.
\end{equation}
Both equations determine the parameters $\beta$, $\mu_\uparrow$, $\mu_\downarrow$ from the initial $W$. One has $0 \leq \varepsilon_\uparrow, \varepsilon_\downarrow \leq 1$ and $\min_k \omega(k) \leq \mathsf{e} \leq \max_k \omega(k)$. Then the map $(\mathsf{e},\varepsilon_\uparrow,\varepsilon_\downarrow)$ to $(\beta,\mu_\uparrow,\mu_\downarrow)$ is one-to-one.

Implicitly our argument assumes that the set of stationary states equals  the set of thermal states. But this might fail if there are not enough collisions, which could very well be the case in low dimensions. The issue of characterizing all stationary states has been accomplished only partially, see~\cite{CollisionalInvariants2006} for results towards this goal. On the other hand we still succeed in listing all stationary states and their domain of attraction. As to be discussed in the following section, for the Hubbard chain with nearest neighbor hopping the stationary states are not exhausted by the thermal ones.

\section{Nearest-neighbor Hubbard chain}
\label{sec:Hubbard1D}

\subsection{Collisions}
\label{sec:Collisions}
\begin{figure}[!ht]
\centering
\includegraphics[width=0.6\textwidth]{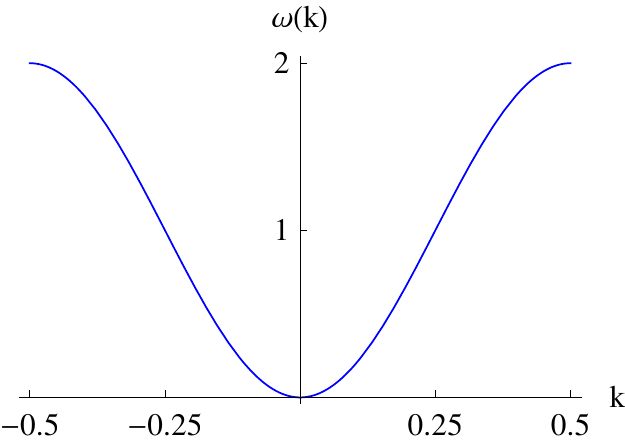}
\caption{The dispersion relation $\omega(k)$ of~\eqref{eq:omega}.}
\label{fig:Omega}
\end{figure}
We return to the Hubbard chain with nearest neighbor hopping~\eqref{eq:omega}. Fig.~\ref{fig:Omega} visualizes $\omega(k)$ for $k \in [-\tfrac12,\tfrac12]$. The first task is to investigate the kinematically allowed collisions defined by $\delta(\underline{k})\delta(\underline{\omega})$. The momentum conservation $\underline{k} = 0 \mod 1$ allows us to eliminate one $k$-variable, say $k_2 = k_3 + k_4 - k_1 \mod 1$. Inserted into energy conservation $\underline{\omega} = 0$ and using some trigonometric identities, one arrives at
\begin{equation}
\label{eq:EnergyConservationFactorized}
\underline{\omega} = 4 \sin(\pi(k_1-k_3)) \sin(\pi(k_1-k_4)) \cos(\pi(k_3+k_4)).
\end{equation}
Fig.~\ref{fig:OmegaEcons} visualizes $\underline{\omega}$ for fixed $k_1 = \tfrac{23}{64}$. From~\eqref{eq:EnergyConservationFactorized}, we conclude that the collision manifold has a solution path $k_3 + k_4 = \tfrac12$ (and thus also $k_1 + k_2 = \tfrac12$) denoted $\gamma_\mathrm{diag}$ in Fig.~\ref{fig:OmegaEcons}, besides the ``trivial'' solutions $k_3 = k_1$ (denoted $\gamma_1$) and $k_4 = k_1$ (denoted $\gamma_2$).

\begin{figure}[!ht]
\centering
\includegraphics[width=0.7\textwidth]{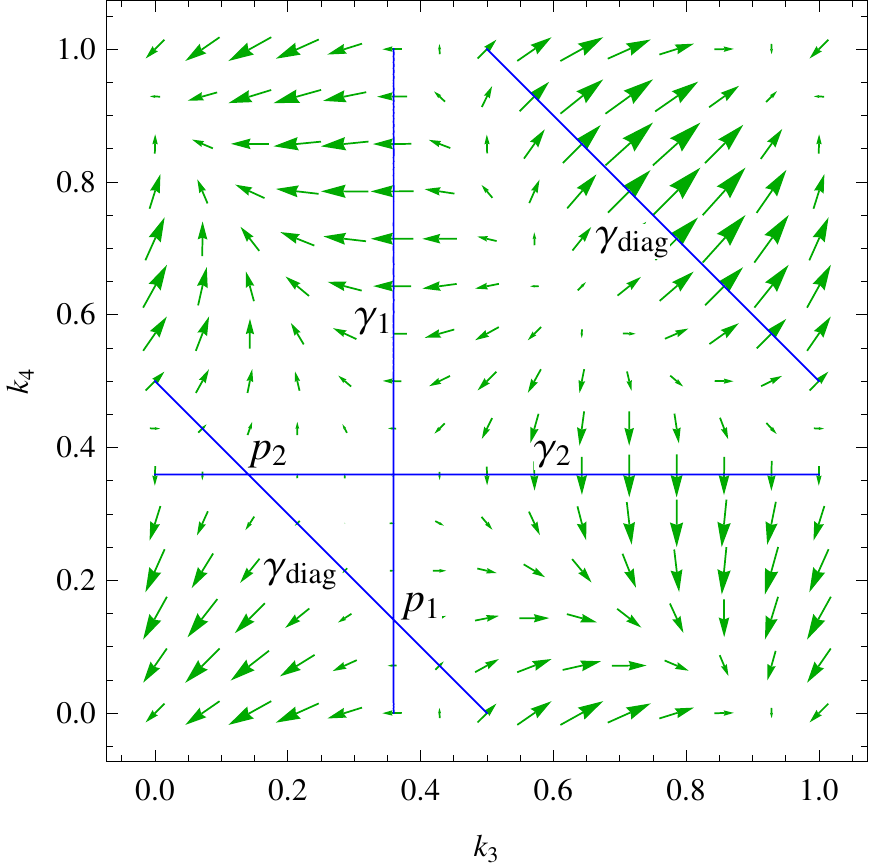}
\caption{Contour (blue straight lines) and gradient (green vectors) of the energy conservation $\underline{\omega} = 0$ for fixed $k_1 = \tfrac{23}{64}$ and after eliminating $k_2$. The diagonal blue line $\gamma_\mathrm{diag}$ is precisely the contour $k_3 + k_4 = \tfrac12$. The vertical and horizontal blue lines, $\gamma_1$ and $\gamma_2$, are the contours $k_3 = k_1$ and $k_4 = k_1$, respectively. $p_i$ marks the intersection of $\gamma_i$ with $\gamma_{\mathrm{diag}}$ for $i = 1, 2$.}
\label{fig:OmegaEcons}
\end{figure}

In what follows, we investigate the integral~\eqref{eq:Cd} of the dissipative collision operator $\mathcal{C}_{\mathrm{d}}$ along the paths $\gamma_1$, $\gamma_2$, and $\gamma_{\mathrm{diag}}$. Using the invariance of the integral~\eqref{eq:Cd} under $k_3 \leftrightarrow k_4$, we may interchange $W_3 \leftrightarrow W_4$. Then the integrand in~\eqref{eq:Cd} can be decomposed as
\begin{equation}
\mathcal{A}[W]_{1234} + \mathcal{A}[W]_{1234}^* = \mathcal{A}_{\mathrm{quad}}[W]_{1234} + \mathcal{A}_{\mathrm{tr}}[W]_{1234}
\end{equation}
with
\begin{align}
\mathcal{A}_{\mathrm{quad}}[W]_{1234} &:= -\tilde{W}_1 W_3 \tilde{W}_2 W_4 - W_4 \tilde{W}_2 W_3 \tilde{W}_1 + W_1 \tilde{W}_3 W_2 \tilde{W}_4 + \tilde{W}_4 W_2 \tilde{W}_3 W_1, \label{eq:Aquad}\\
\mathcal{A}_{\mathrm{tr}}[W]_{1234} &:= \big(\tilde{W}_1 W_3 + W_3 \tilde{W}_1\big) \mathrm{tr}[\tilde{W}_2 W_4] - \big(W_1 \tilde{W}_3 + \tilde{W}_3 W_1\big) \mathrm{tr}[W_2 \tilde{W}_4].
\label{eq:Atr}
\end{align}
Inspection of~\eqref{eq:Aquad} immediately reveals that $\mathcal{A}_{\mathrm{quad}}[W]_{1221} \equiv 0$ along $\gamma_2$ since $(k_1,k_2) = (k_4,k_3)$. Moreover, we also have $\mathcal{A}_{\mathrm{quad}}[W]_{1212} \equiv 0$ along $\gamma_1$ with $(k_1,k_2) = (k_3,k_4)$, which can be checked by expanding~\eqref{eq:Aquad}. In other words, $\mathcal{A}_{\mathrm{quad}}[W]_{1234}$ contributes only along $\gamma_{\mathrm{diag}}$.

The situation is different for the term $\mathcal{A}_{\mathrm{tr}}[W]_{1234}$: while $\mathcal{A}_{\mathrm{tr}}[W]_{1212} \equiv 0$ along $\gamma_1$ by direct inspection of~\eqref{eq:Atr}, it is (in general) non-zero along $\gamma_2$ and also along $\gamma_{\mathrm{diag}}$. In summary, for evaluating the dissipative collision integral~\eqref{eq:Cd} we have to integrate $\mathcal{A}_{\mathrm{tr}}[W]$ along $\gamma_2$ and both $\mathcal{A}_{\mathrm{quad}}[W]$ and $\mathcal{A}_{\mathrm{tr}}[W]$ along $\gamma_{\mathrm{diag}}$.

\medskip

As a side remark, the solution path $\gamma_{\mathrm{diag}}$ is special for the nearest neighbor dispersion relation~\eqref{eq:omega}. If we add to~\eqref{eq:omega} a small next-nearest neighbor term, then $\gamma_1$ and $\gamma_2$ persist and $\gamma_{\mathrm{diag}}$ gets somewhat deformed. In addition, a new collision channel opens up, as illustrated in Fig.~\ref{fig:OmegaNNNEcons} for the dispersion relation
\begin{equation}
\label{eq:omega_nnn}
\omega_{\mathrm{nnn}}(k) = \omega(k) - \frac12 \cos(4 \pi k) = 1 - \cos(2\pi k) - \frac12 \cos(4 \pi k).
\end{equation}

\begin{figure}[!ht]
\centering
\includegraphics[width=0.7\textwidth]{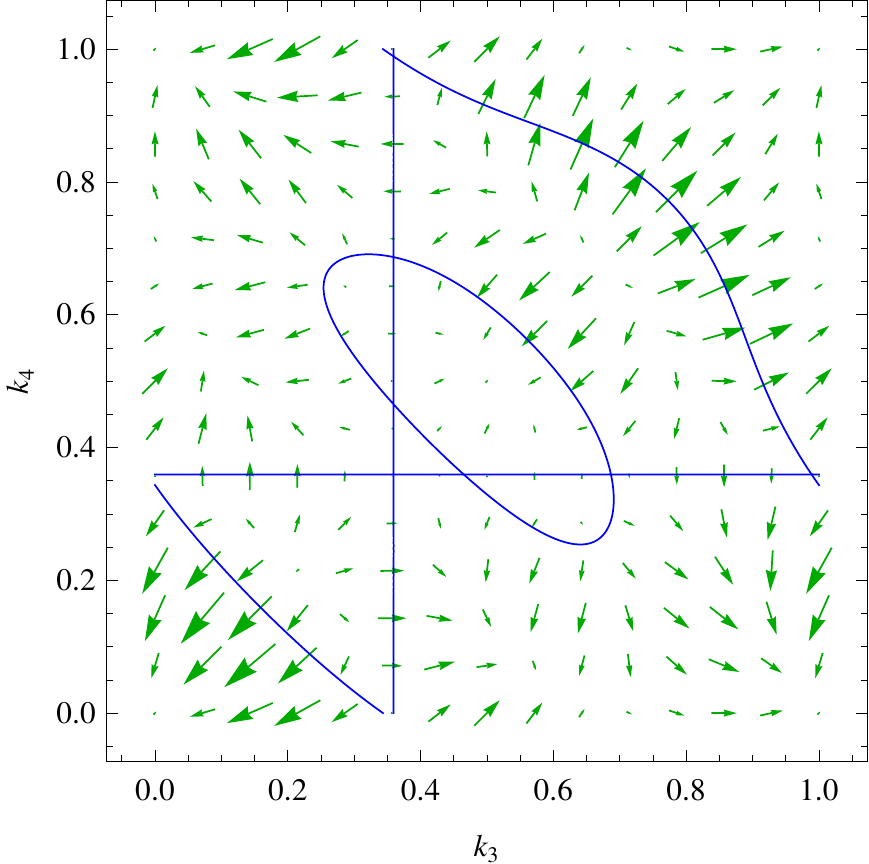}
\caption{Contour (blue straight lines) and gradient (green vectors) of the energy conservation for a next-nearest neighbor model with $\omega_{\mathrm{nnn}}(k)$ of~\eqref{eq:omega_nnn} and fixed $k_1 = \tfrac{23}{64}$.}
\label{fig:OmegaNNNEcons}
\end{figure}

\subsection{Stationary solutions}
\label{sec:StationarySolutions}

The collision paths $\gamma_1$, $\gamma_2$, and $\gamma_{\mathrm{diag}}$ have special symmetries, from which one can guess the form of stationary solutions beyond the thermal one. They have the same structure as the Fermi-Dirac state, but with $\omega(k)$ replaced by a more general function $f$. One finds
\begin{equation}
\label{eq:StationarySolutions}
W_{\mathrm{st}}(k) = \sum_{\sigma \in \{\uparrow,\downarrow\}} \lambda_\sigma(k) \, \lvert\sigma\rangle\langle\sigma\rvert, \quad \lambda_\sigma(k) = \left( \mathrm{e}^{f(k) - a_\sigma} + 1 \right)^{-1},
\end{equation}
where $f$ is a real-valued, $1$-periodic function satisfying $f(k) = -f(\tfrac12 - k)$, $a_\sigma \in \mathbb{R}$, and $\lvert\sigma\rangle$ is an orthogonal basis, independent of $k$.

As discussed in the Appendix,~\eqref{eq:StationarySolutions} characterizes the entire set of stationary solutions. The next step is to identify the domain of attraction for $W_{\mathrm{st}}$, in other word to study the map from the initial $W$ to $W_{\mathrm{st}}$. Here we can follow the strategy described at the end of Sec.~\ref{sec:Properties}. 

We first note that there are many energy-like quantities which are conserved. Let $g: \mathbb{T} \to \mathbb{R}$ with $g(k) = -g(\tfrac12 - k)$. Then
\begin{equation}
\label{eq:GeneralEnergyConservation}
\frac{\mathrm{d}}{\mathrm{d} t} \int_{\mathbb{T}} \mathrm{d}k\, g(k)\, \mathrm{tr}[W(k)] = 0,
\end{equation}
which generalizes the energy conservation~\eqref{eq:EnergyConservation}. \eqref{eq:GeneralEnergyConservation} again follows by an appropriate interchange of the integration variables $k_1,\dots,k_4$.

%
%
%
By substituting $g(k) = \delta(k - k') - \delta(k - \tfrac12 + k')$ for arbitrary $k' \in \mathbb{T}$, one concludes that
\begin{equation}
h(k) = \mathrm{tr}[W(k)] - \mathrm{tr}[W(\tfrac12 - k)]
\end{equation}
is pointwise constant for each $k \in \mathbb{T}$. Assuming that the initial $W$ converges to a stationary state of the form~\eqref{eq:StationarySolutions}, it must hold that
\begin{equation}
\label{eq:TraceDiffStationary}
h(k) = \sum_{\sigma \in \{\uparrow,\downarrow\}} \Big( \big( \mathrm{e}^{f(k) - a_\sigma} + 1 \big)^{-1} - \big( \mathrm{e}^{- f(k) - a_\sigma} + 1 \big)^{-1} \Big).
\end{equation}

Equivalently, as in Sec.~\ref{sec:Properties}, the spin conservation law requires that the eigenvalues $\varepsilon_\sigma$ in~\eqref{eq:ConservedSigmaBasis} are equal to
\begin{equation}
\label{eq:SpinStationary}
\varepsilon_\sigma = \int_\mathbb{T} \mathrm{d}k\, \left( \mathrm{e}^{f(k) - a_\sigma} + 1 \right)^{-1}.
\end{equation}
We claim that~\eqref{eq:TraceDiffStationary} and~\eqref{eq:SpinStationary} uniquely determine $f$ and $a_\sigma$, or more specifically, that the map between
\begin{equation} \label{eq:SET1}
\mathrm{tr}[W(k)] - \mathrm{tr}[W(\tfrac12 - k)], \lvert k \rvert \le \tfrac14, \qquad 0 \leq \varepsilon_\uparrow, \varepsilon_\downarrow \leq 1
\end{equation}
and
\begin{equation} \label{eq:SET2}
f(k) \text{ with } f(k) = -f(\tfrac12 - k), \lvert k \rvert \le \tfrac14, \qquad a_\uparrow, a_\downarrow
\end{equation}
is one-to-one. In particular, to a given $W$ one can associate a unique $W_\mathrm{st}$ of the form~\eqref{eq:StationarySolutions}.
\begin{proof}
By a short calculation,~\eqref{eq:TraceDiffStationary} can be written as
\begin{equation}
h(k) = - \sinh(f(k)) \left( \frac{1}{\cosh a_\uparrow  + \cosh f(k) } + \frac{1}{\cosh a_\downarrow  + \cosh f(k) } \right) \label{11}	
\end{equation}
and~\eqref{eq:SpinStationary} as
\begin{equation}
\varepsilon_\sigma = \int_\mathrm{I} \mathrm{d}k \left( \frac{\sinh a_\sigma }{\cosh a_\sigma  + \cosh f(k) } + 1 \right)
\end{equation}
with interval of integration $\mathrm{I} := [-\tfrac14,\tfrac14 ]$. We define a generalized ``free energy'' through
\begin{equation}
H(f,a_\uparrow,a_\downarrow) = \int_\mathrm{I} \mathrm{d}k \sum_{\sigma \in \{\uparrow,\downarrow\}} \log\big(\cosh a_\sigma + \cosh f(k) \big).
\end{equation}
The map $(f,a_\uparrow,a_\downarrow) \mapsto H$ is strictly convex. Furthermore
\begin{equation}
\frac{\partial}{\partial a_\sigma} H = \int_\mathrm{I} \mathrm{d}k \frac{\sinh a_\sigma }{\cosh a_\sigma + \cosh f(k) } = \varepsilon_\sigma - \frac{1}{2}	
\end{equation}
and
\begin{equation}
\frac{\delta H}{\delta f(k)} = \sum_{\sigma \in \{\uparrow,\downarrow\}} \frac{\sinh f(k) }{\cosh a_\sigma + \cosh f(k) } = - h(k).
\end{equation}
Thus the map from above can be viewed as Legendre transform from the first set~\eqref{eq:SET1} to the second set of variables~\eqref{eq:SET2}. Since $H$ is convex, the map is one-to-one.
\end{proof}

%
%
%
%

\section{Numerical Procedure}
\label{sec:Numerics}

\subsection{Mollifying the collision operators}

\textit{Dissipative collision operator.} We have to make sure that $\delta(\underline{\omega}) \delta(\underline{k})$ is a well-defined prescription. For this purpose we eliminate $k_2$ and, using~\eqref{eq:EnergyConservationFactorized}, obtain

\begin{equation}
\int_{\gamma_2} \mathrm{d} k_4 \, \delta(\underline{\omega}) = \left\lvert\partial_{k_4}\underline{\omega} \vert_{k_4 = k_1}\right\rvert^{-1} = \left( 2\pi \left\lvert \sin(2\pi k_3) - \sin(2\pi k_1)\right\rvert \right)^{-1}.
\end{equation}
Likewise along $\gamma_{\mathrm{diag}}$ it holds that
\begin{equation}
\int_{\gamma_{\mathrm{diag}}} \mathrm{d} k_4 \, \delta(\underline{\omega}) = \left\lvert\partial_{k_4}\underline{\omega} \vert_{k_4 = 1/2 - k_3}\right\rvert^{-1} = \left( 2\pi \left\lvert \sin(2\pi k_3) - \sin(2\pi k_1)\right\rvert \right)^{-1}.
\end{equation}
Considering the subsequent integration over $k_3$ in~\eqref{eq:Cd}, the critical point $k_3 = \tfrac12 - k_1$ (marked $p_2$ in Fig.~\ref{fig:OmegaEcons}) would lead to infinities in general. (Integrating along $\gamma_\mathrm{diag}$ across the point $p_1$ ($k_3 = k_1$) is possible since $\mathcal{A}[W]_{1234} + \mathcal{A}[W]_{1234}^*$ is zero at that point, as explained above). As mollification we choose the substitution
\begin{equation}
\left(2\pi \left\lvert\sin(2\pi k_3) - \sin(2\pi k_1)\right\rvert\right)^{-1} \to \left( 4 \pi^2 \big( \sin(2\pi k_3) - \sin(2\pi k_1) \big)^2 + \epsilon^2 \right)^{-1/2}
\end{equation}
with some finite $\epsilon > 0$. Concretely, we use $\epsilon = \tfrac12$ for the simulations.

\medskip

\noindent \textit{Conservative collision operator.} The integral~\eqref{eq:Heff} for the conservative collision operator $\mathcal{C}_{\mathrm{c}}$ differs from dissipative integral~\eqref{eq:Cd}, since there is only a single delta distribution $\delta(\underline{k})$. Thus, we can eliminate $k_2 = k_3 + k_4 - k_1$ as for the dissipative case, but still have to integrate over both $k_3$ and $k_4$.

\begin{figure}[!ht]
\centering
\subfloat[]{
\includegraphics[width=0.5\textwidth]{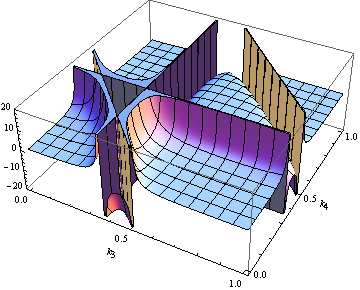}
\label{fig:OmegaInv}}
\subfloat[]{
\includegraphics[width=0.5\textwidth]{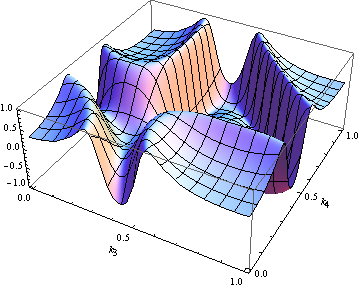}
\label{fig:OmegaInvMollified}}
\caption{(a) The term $1/\underline{\omega}$ as function of $k_3$ and $k_4$, for fixed $k_1 = 23/64$. (b) The ``mollified'' version $\underline{\omega}/(\underline{\omega}^2 + \epsilon^2)$ with $\epsilon = \frac12$ is free of singularities.}
\end{figure}

The integral~\eqref{eq:Heff} is defined as Cauchy principal value with respect to $1/\underline{\omega}$. Fig.~\ref{fig:OmegaInv} illustrates this term in dependence of $k_3$ and $k_4$ (compare also with Fig.~\ref{fig:OmegaEcons}). While the Cauchy principal value exists for continuous $W(k)$, the numeric calculation is rather demanding and we resort to a mollifying procedure as for the dissipative collision operator. Concretely, we substitute
\begin{equation}
\label{eq:InvOmegaMollified}
\frac{1}{\underline{\omega}} \to \frac{\underline{\omega}}{\underline{\omega}^2 + \epsilon^2}
\end{equation}
with finite $\epsilon > 0$ (in our case $\epsilon = \tfrac12$). Fig.~\ref{fig:OmegaInvMollified} shows the right-hand side, in direct comparison with the unmollified version. Note that $\mathcal{C}_{\mathrm{c}}$ could be defined via the integral~\eqref{eq:Heff} with the replacement~\eqref{eq:InvOmegaMollified}, and then letting $\epsilon \to 0$.

\subsection{Solving the Boltzmann equation}

In order to solve the Boltzmann equation~\eqref{eq:BoltzmannEquation} numerically, we discretize the $k$ variable by a uniform grid
\begin{equation}
k_j = \frac{j}{n}, \quad j = 0, \dots, n - 1
\end{equation}
with $n = 64$ in our case. We have chosen the interval $[0,1]$ instead of (equivalently) $[-1/2,1/2]$ simply for convenience. Note that due to periodicity, $W(1,t) = W(0,t)$, so the point $k = 1$ is not required. We use the trapezoidal rule to approximate the integrals~\eqref{eq:Cd} and~\eqref{eq:Heff} of the dissipative and conservative collision operators, respectively. Note that this approach is particularly suited for analytic period functions. Moreover, considering the 2-dimensional integral of the conservative collision operator, we ensure that the variable $k_2 = k_3 + k_4 - k_1 \mod 1$ is a grid point whenever $k_1$, $k_3$ and $k_4$ are grid points, in distinction from other integration rules with non-uniform points.

We solve the Boltzmann differential equation~\eqref{eq:BoltzmannEquation} for the time variable by a \emph{Strang splitting} (or \emph{symmetric Trotter splitting}) technique: denoting the (fixed) timestep by $\Delta t$, we combine an explicit midpoint rule step for the dissipative part with the time evolution operator for the conservative part:
\begin{align}
\label{eq:timestepH1first}
&X(k_j,t) = \mathrm{e}^{-i H_\mathrm{eff}(k_j,t)\, \Delta t/2} \,W(k_j,t)\, \mathrm{e}^{i H_\mathrm{eff}(k_j,t)\, \Delta t/2}, \quad j = 0,\dots, n-1,\\
&Y(k_j,t) = X(k_j,t) + \Delta t\ \mathcal{C}_{\mathrm{d}}\!\left[X(t) + \frac{\Delta t}{2} \mathcal{C}_{\mathrm{d}}[X(t)]\right](k_j), \quad j = 0,\dots, n-1,\\
\label{eq:timestepH1second}
&W(k_j,t+\Delta t) = \mathrm{e}^{-i H'_\mathrm{eff}(k_j,t)\, \Delta t/2} \,Y(k_j,t)\, \mathrm{e}^{i H'_\mathrm{eff}(k_j,t)\, \Delta t/2}, \quad j = 0,\dots, n-1,
\end{align}
where $H'_\mathrm{eff}$ depends on $Y(t)$. The midpoint rule has order $2$, while the time evolution operator $\mathrm{e}^{-i H\, \Delta t/2} (\cdot) \mathrm{e}^{i H\, \Delta t/2}$ has only order $1$. Thus, the complete integration scheme has order $1$. As advantage, the time evolution operator preserves matrix symmetry. For the simulations, we use $\Delta t = 1/16$, and the overall simulation time interval runs from $t = 0$ to varying upper limit $t = 15, \dots, 45$.

\subsection{Cost analysis}

Considering a single time step, the most expensive part is the evaluation of the conservative collision operator $\mathcal{C}_{\mathrm{c}}$ in~\eqref{eq:timestepH1first} and~\eqref{eq:timestepH1second}, i.e., the 2-dimensional integral~\eqref{eq:Heff} after eliminating $k_2$. (The dissipative collision operator $\mathcal{C}_{\mathrm{d}}$ requires only a one-dimensional integration.) For the uniform discretization with $n$ points in each direction, this scales like $\mathcal{O}(n^2)$. One time step requires the evaluation of this integral for $n$ different $k_1$ points, thus the overall cost is $\mathcal{O}(n^3)$.

On a Intel Core i7-740QM Processor (6M cache, 1.73~GHz) without using parallelization, one time step takes approximately $90\,\mathrm{s}$ (Mathematica~8 implementation), so a complete simulation is approximately $6\,\mathrm{h}$. Note that the performance could be easily increased by a C/C++ implementation and making use of parallelization.

\section{Simulation results}
\label{sec:Results}

\textit{High-temperature state.} Fig.~\ref{fig:KMS_infiniteTmuW_FD} illustrates a high-temperature Fermi-Dirac equilibrium state $W_\mathrm{FD}$~\eqref{eq:WFermiDirac} where $\beta = 10^{-4}$, $\mu_\uparrow = 10^4$ and $\mu_\downarrow = -10^4$.
\begin{figure}[!ht]
\centering
\includegraphics[width=0.5\textwidth]{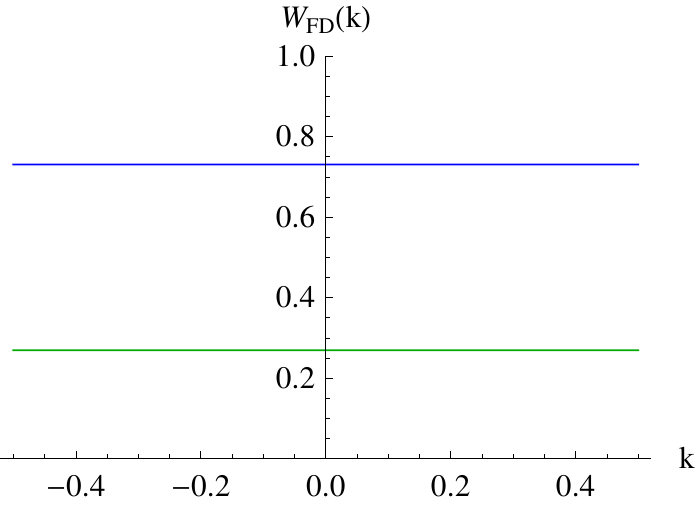}
\caption{Diagonal matrix entries of a high-temperature equilibrium state $W_\mathrm{FD}$. The state is (almost) independent of $k$.}
\label{fig:KMS_infiniteTmuW_FD}
\end{figure}
We have chosen the initial state
\begin{figure}[!ht]
\centering
\subfloat[matrix entries of initial $W(k,0)$]{
\includegraphics[width=0.5\textwidth]{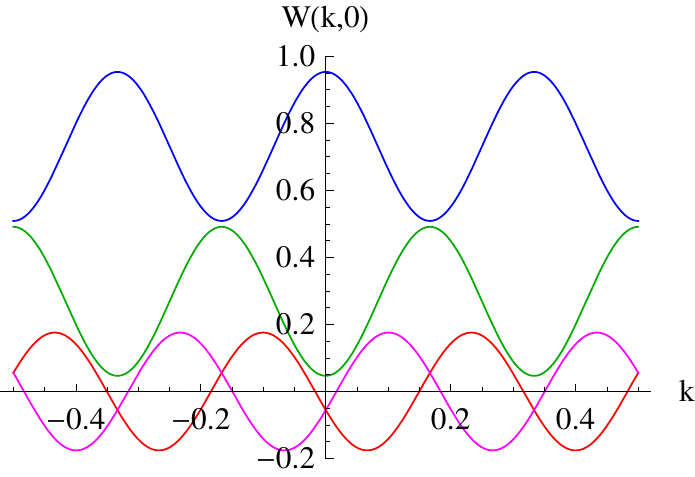}}
\subfloat[eigenvalues of $W(k,0)$]{
\includegraphics[width=0.5\textwidth]{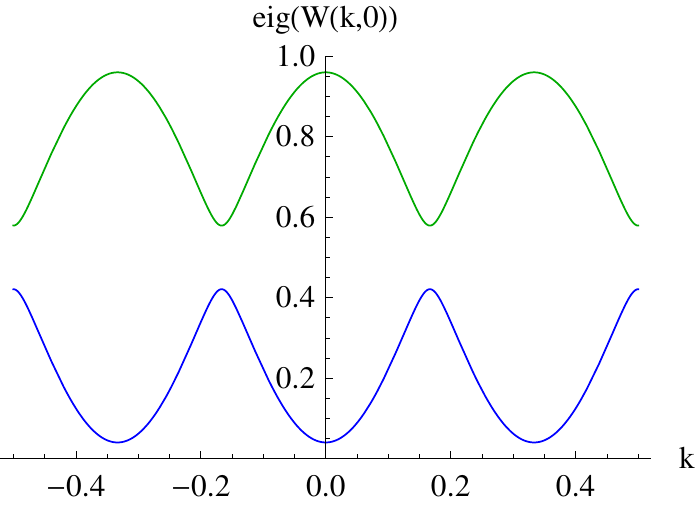}}
\caption{(a) Initial state $W(k,0) = W_\mathrm{FD}(k) + V(k)$ with $V(k)$ defined in~\eqref{eq:VinfiniteTmu}. The blue and green curves show the real diagonal entries, and the red and magenta curves the real and imaginary parts of the off-diagonal $\lvert\uparrow\rangle \langle\downarrow\rvert$ entry, respectively. (b) Corresponding eigenvalues of $W(k,0)$ in the interval $[0,1]$.}
\label{fig:KMS_infiniteTmuW0}
\end{figure}
(see Fig.~\ref{fig:KMS_infiniteTmuW0}) by $W(k,0) = W_\mathrm{FD}(k) + V(k)$ with $V(k)$ a rotation of the Pauli $\sigma_z$ matrix and subtracting the constant matrix $\tau/18$:
\begin{equation}
\label{eq:VinfiniteTmu}
V(k) = \frac{1}{4} \mathrm{e}^{-2 \pi i\,\tau\,k} \, \sigma_z \, \mathrm{e}^{2 \pi i\,\tau\,k} - \frac{\tau}{18}, \quad \tau = \sigma_x - \sigma_y + \frac12 \sigma_z.
\end{equation}
$V(k)$ satisfies
\begin{equation}
\int_{\mathbb{T}} \mathrm{d} k \, V(k) = 0  \quad \text{and} \quad \mathrm{tr}[V(k)] = 0,
\end{equation}
for all $k \in \mathbb{T}$ such that $W(0)$ matches $W_\mathrm{FD}$ in terms of the spin and energy conservation laws~\eqref{eq:SpinConservation} and~\eqref{eq:EnergyConservation}.

Fig.~\ref{fig:KMS_infiniteTmuConv} illustrates the convergence to the equilibrium state $W_\mathrm{FD}$. Interestingly, we observe that the off-diagonal entries converge \emph{slower} than the diagonal entries, but an analytic explanation of this effect is still lacking.
\begin{figure}[!ht]
\centering
\subfloat[convergence in Hilbert-Schmidt norm]{
\includegraphics[width=0.5\textwidth]{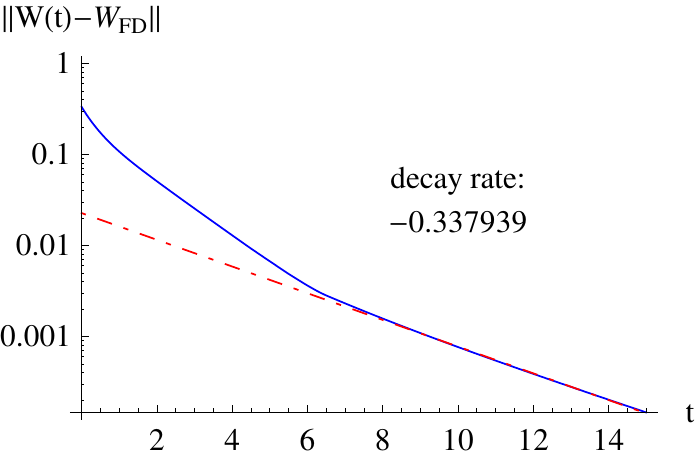}}
\subfloat[entropy convergence]{
\includegraphics[width=0.5\textwidth]{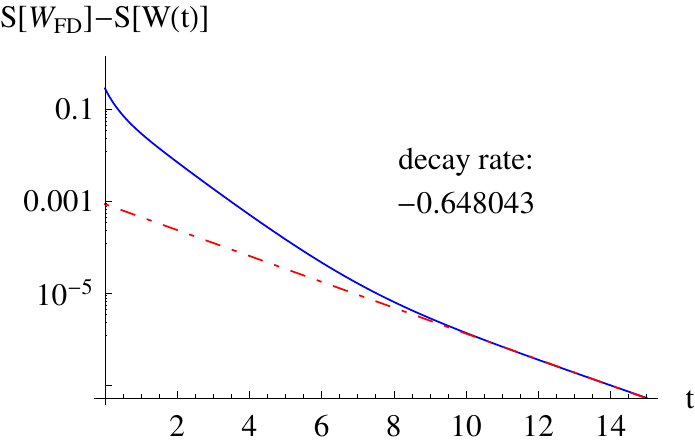}}\\
\subfloat[convergence of the diagonal entries]{
\includegraphics[width=0.5\textwidth]{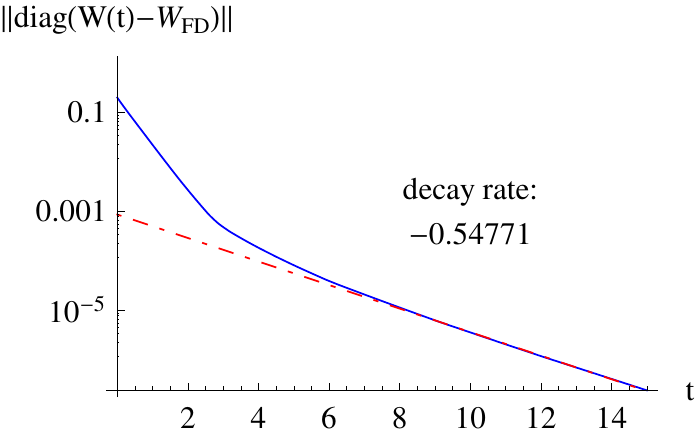}}
\subfloat[convergence of the off-diagonal entries]{
\includegraphics[width=0.5\textwidth]{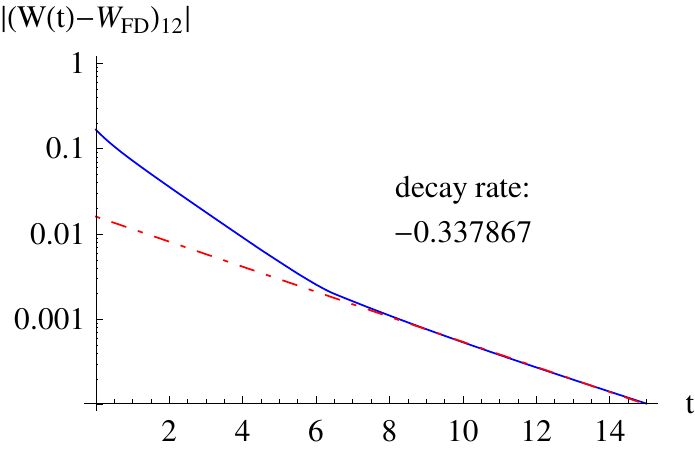}}
\caption{Convergence of the initial $W(0)$ (Fig.~\ref{fig:KMS_infiniteTmuW0}) to the high-temperature equilibrium state $W_\mathrm{FD}$ (Fig.~\ref{fig:KMS_infiniteTmuW_FD}) as semi-logarithmic plot (blue). The decay rate is the slope of the fitted red dotted line.}
\label{fig:KMS_infiniteTmuConv}
\end{figure}

Figure~\ref{fig:KMS_infiniteTmu_Bloch} displays the Bloch vectors $\vec{r}(k,t) \in \mathbb{R}^3$ of $W(k,t)$ parametrized by $k$, i.e.,
\begin{equation}
W(k,t) = \frac12 \left(\mathbbm{1} + \vec{r}(k,t) \cdot \vec{\sigma}\right), \quad \vec{\sigma} = (\sigma_x, \sigma_y, \sigma_z).
\end{equation}
The dark blue curve shows the initial $\vec{r}(k,0)$ and lighter blue curve shows $\vec{r}(k,\tfrac12)$. As time progresses, the initial curve straps to almost a single point, since $W_\mathrm{FD}(k)$ is almost independent of $k$.

\begin{figure}[!ht]
\centering
\subfloat[]{
\includegraphics[width=0.45\textwidth]{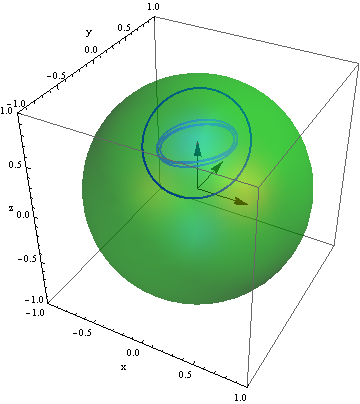}}
\subfloat[]{
\includegraphics[width=0.45\textwidth]{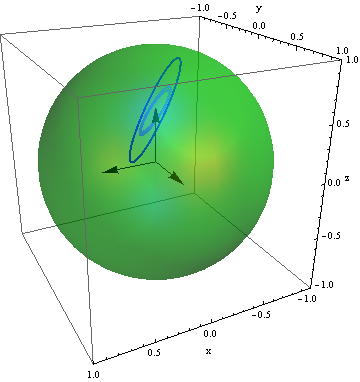}}
\caption{Bloch sphere representation (dark blue curve) of the initial $W(k,0)$ (Fig.~\ref{fig:KMS_infiniteTmuW0}), parametrized by $k$ and viewed from 2 perspectives. The light blue curve shows the corresponding Bloch curve of $W(k,t)$ for $t = 1/2$. Finally, the curve for the high-temperature equilibrium state $W_\mathrm{FD}(k)$ is indiscernible from a single point at the tip of the z-axis arrow.}
\label{fig:KMS_infiniteTmu_Bloch}
\end{figure}

\medskip

\textit{Low-temperature state.} Fig.~\ref{fig:KMS_lowT_FD} illustrates a low-temperature Fermi-Dirac equilibrium state $W_\mathrm{FD}$~\eqref{eq:WFermiDirac} with $\beta = 7$, $\mu_\uparrow = \tfrac{17}{16}$ and $\mu_\downarrow = \tfrac{15}{16}$.
\begin{figure}[!ht]
\centering
\includegraphics[width=0.5\textwidth]{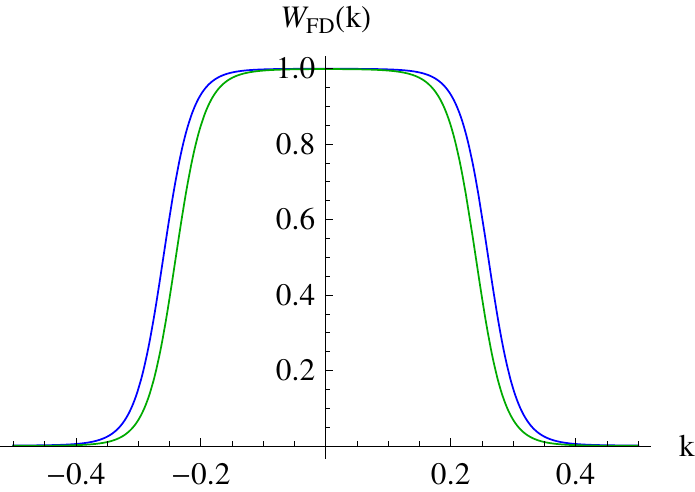}
\caption{Diagonal matrix entries of a low-temperature equilibrium state $W_\mathrm{FD}$ ($\beta = 7$, $\mu_\uparrow = \tfrac{17}{16}$, $\mu_\downarrow = \tfrac{15}{16}$).}
\label{fig:KMS_lowT_FD}
\end{figure}
In this case, for given $W_\mathrm{FD}(k)$ the variational freedom for the initial $W(k,0)$ with the same symmetries as $W_\mathrm{FD}(k)$ is strongly restricted. Similar to the high-temperature state, we define $W(k,0) = W_\mathrm{FD}(k) + V(k)$ (see Fig.~\ref{fig:KMS_lowT_W0})
\begin{figure}[!ht]
\centering
\subfloat[initial $W(k,0)$]{
\includegraphics[width=0.5\textwidth]{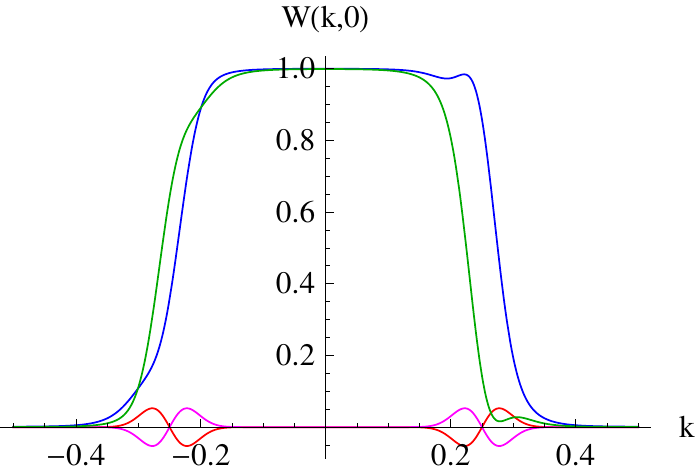}}
\subfloat[eigenvalues of $W(k,0)$]{
\includegraphics[width=0.5\textwidth]{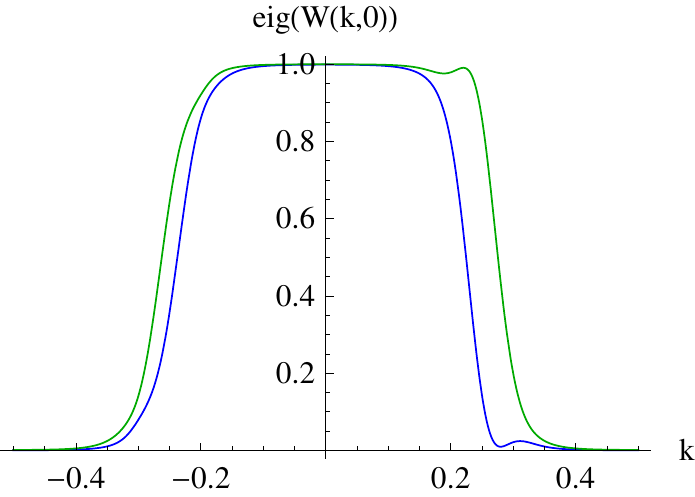}}
\caption{(a) Initial state $W(k,0) = W_\mathrm{FD}(k) + V(k)$ with $V(k)$ defined in~\eqref{eq:VlowT}. The blue and green curves show the real diagonal entries, and the red and magenta curves the purely imaginary off-diagonal entries. (b) Eigenvalues of $W(k,0)$ in the interval $[0,1]$.}
\label{fig:KMS_lowT_W0}
\end{figure}
with
\begin{equation}
\label{eq:VlowT}
V(k) = \frac14 \left(\mathrm{e}^{-64 \sin(\pi (k-3/4))^2}-\mathrm{e}^{-64 \sin(\pi (k-1/4))^2}\right) \mathrm{e}^{-2 \pi i\,\sigma_x\,k} \, \sigma_z \, \mathrm{e}^{2 \pi i\,\sigma_x\,k}.
\end{equation}
Again, $V(k)$ satisfies
\begin{equation}
\int_{\mathbb{T}} \mathrm{d} k \, V(k) = 0  \quad \text{and} \quad \mathrm{tr}[V(k)] = 0,
\end{equation}
for all $k \in \mathbb{T}$. We observe that the convergence to the equilibrium state (Fig.~\ref{fig:KMS_lowTConv}) is slower than for the high-temperature state in the previous paragraph. (Note that the simulation time interval is now $[0,45]$ as compared to $[0,15]$.)

\begin{figure}[!ht]
\centering
\subfloat[convergence in Hilbert-Schmidt norm]{
\includegraphics[width=0.5\textwidth]{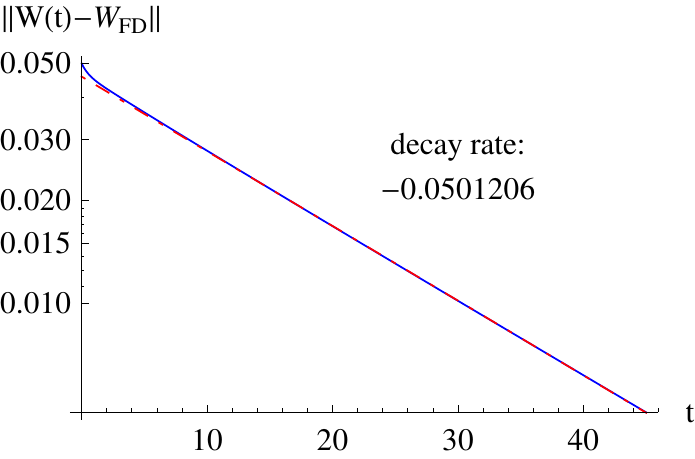}}
\subfloat[entropy convergence]{
\includegraphics[width=0.5\textwidth]{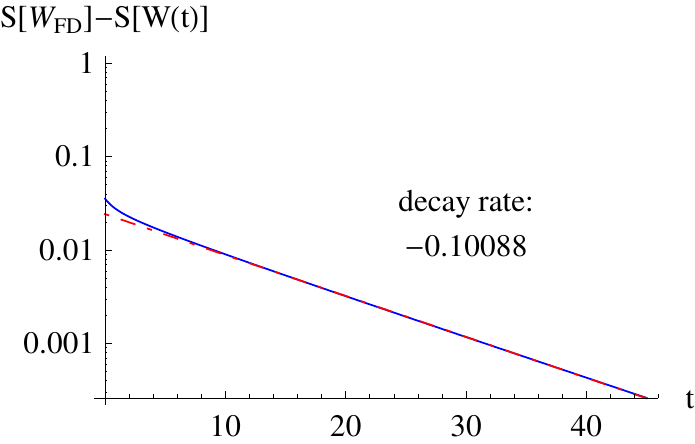}}\\
\subfloat[convergence of the diagonal entries]{
\includegraphics[width=0.5\textwidth]{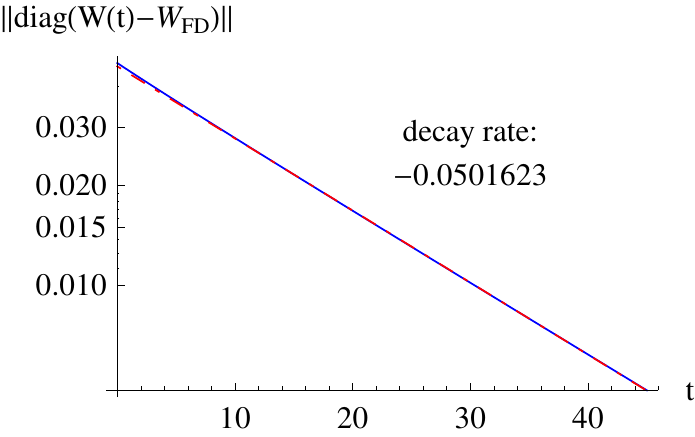}}
\subfloat[convergence of the off-diagonal entries]{
\includegraphics[width=0.5\textwidth]{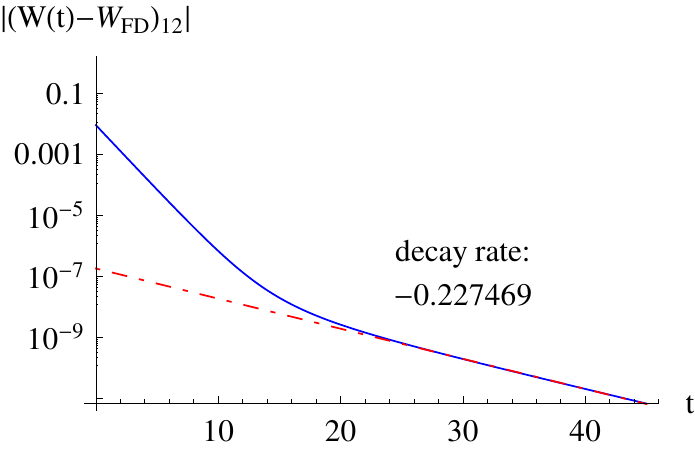}}
\caption{Convergence to the low-temperature equilibrium state $W_\mathrm{FD}$ (Fig.~\ref{fig:KMS_lowT_FD}) as semi-logarithmic plot (blue). The decay rate is the slope of the fitted red dotted line. For this example, we observe that the off-diagonal entries converge much faster than the diagonal ones.}
\label{fig:KMS_lowTConv}
\end{figure}

\medskip

\textit{Degenerate chemical potentials.} We consider a Fermi-Dirac equilibrium state with degenerate chemical potentials $\mu_\uparrow = \mu_\downarrow$, as illustrated in Fig.~\ref{fig:KMS_middleT_same_mu_FD}.
\begin{figure}[!ht]
\centering
\includegraphics[width=0.5\textwidth]{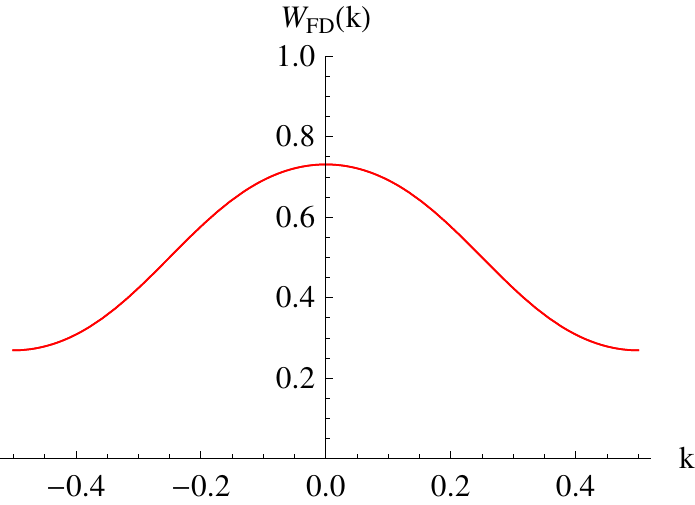}
\caption{Diagonal matrix entries of an equilibrium state $W_\mathrm{FD}$ with $\beta = 1$ and same chemical potentials $\mu_\uparrow = \mu_\downarrow = 1$.}
\label{fig:KMS_middleT_same_mu_FD}
\end{figure}
As initial state $W(k,0)$, we set $W(k,0) = W_\mathrm{FD}(k) + V(k)$ (see Fig.~\ref{fig:KMS_middleT_same_mu_W0}) with $V(k)$ taken from~\eqref{eq:VinfiniteTmu}.
\begin{figure}[!ht]
\centering
\subfloat[initial $W(k,0)$]{
\includegraphics[width=0.5\textwidth]{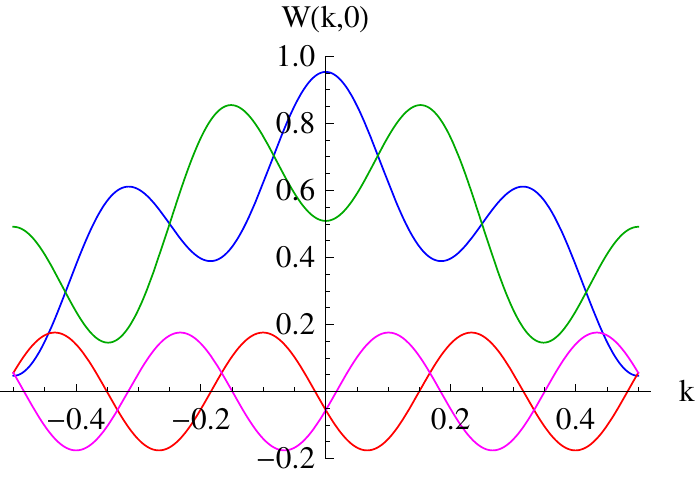}}
\subfloat[eigenvalues of $W(k,0)$]{
\includegraphics[width=0.5\textwidth]{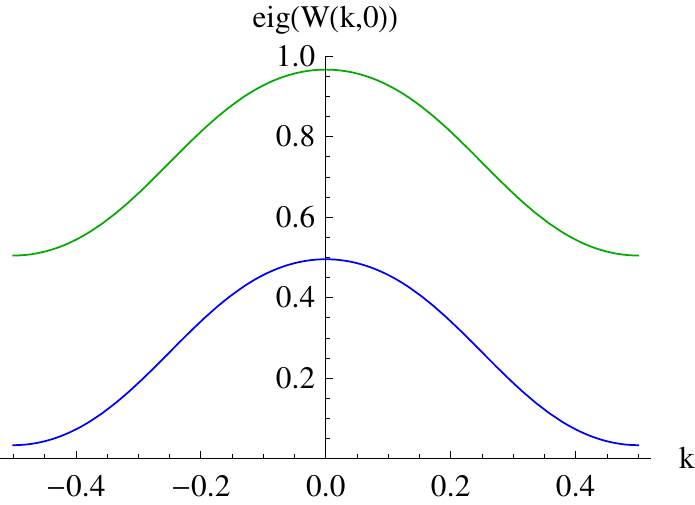}}
\caption{(a) Initial state $W(k,0) = W_\mathrm{FD}(k) + V(k)$ with $W_\mathrm{FD}(k)$ proportional to the identity matrix (see Fig.~\ref{fig:KMS_middleT_same_mu_FD}), and $V(k)$ defined in~\eqref{eq:VinfiniteTmu}. The blue and green curves show the real diagonal entries, and the red and magenta curves the purely imaginary off-diagonal entries. (b) The eigenvalues of $W(k,0)$ are non-degenerate, different from the equilibrium state $W_\mathrm{FD}(k)$.}
\label{fig:KMS_middleT_same_mu_W0}
\end{figure}
As illustrated in Fig.~\ref{fig:KMS_middleT_same_muConv}, there is no indication that the convergence changes due to the degeneracy.
\begin{figure}[!ht]
\centering
\subfloat[convergence in Hilbert-Schmidt norm]{
\includegraphics[width=0.5\textwidth]{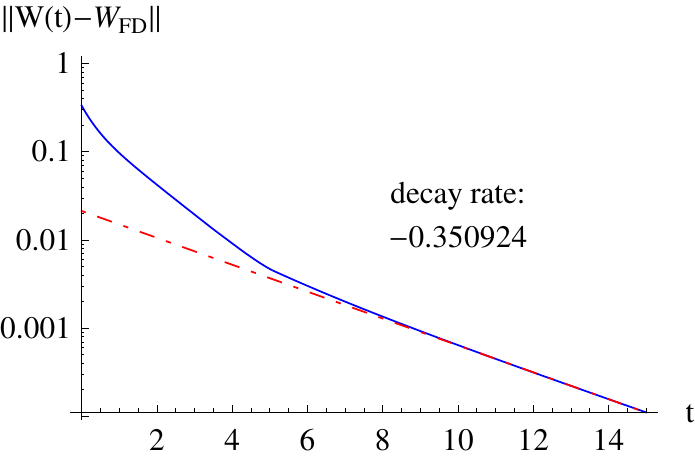}}
\subfloat[entropy convergence]{
\includegraphics[width=0.5\textwidth]{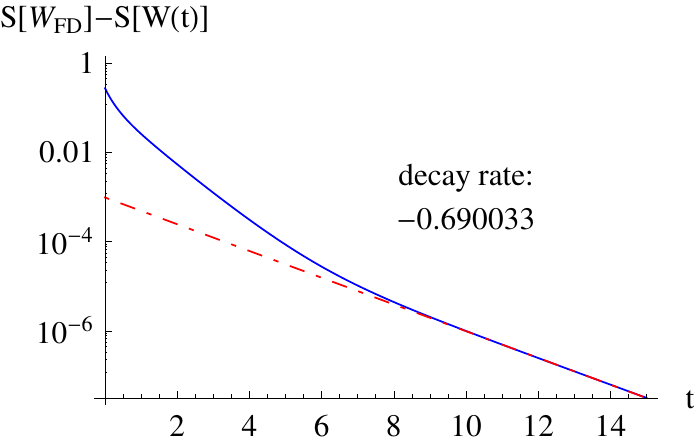}}
\caption{Convergence to the equilibrium state $W_\mathrm{FD}$ with degenerate eigenvalues (Fig.~\ref{fig:KMS_middleT_same_mu_FD}) as semi-logarithmic plot (blue). The decay rate is the slope of the fitted red dotted line.}
\label{fig:KMS_middleT_same_muConv}
\end{figure}
Two time snapshots of the eigenvalues of $W(k,t)$ are shown in Fig.~\ref{fig:KMS_middleT_same_muEigConv}. They have a peculiar shape, and converge to the diagonal entries of $W_\mathrm{FD}$, as expected.
\begin{figure}[!ht]
\centering
\subfloat[eigenvalues at $t = \tfrac12$]{
\includegraphics[width=0.5\textwidth]{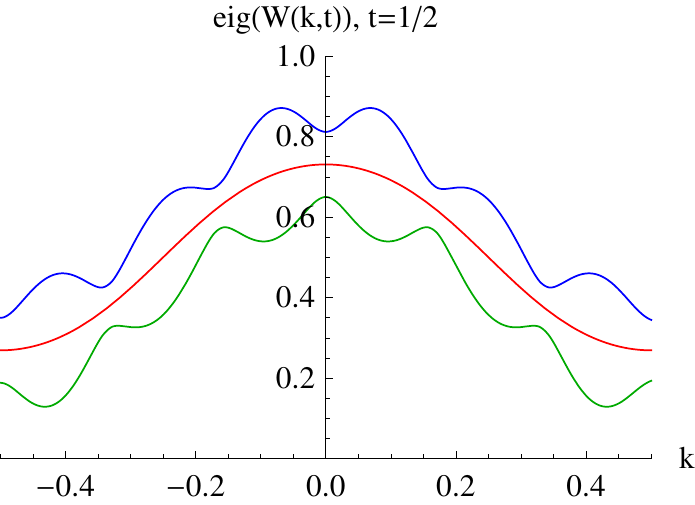}}
\subfloat[eigenvalues at $t = 2$]{
\includegraphics[width=0.5\textwidth]{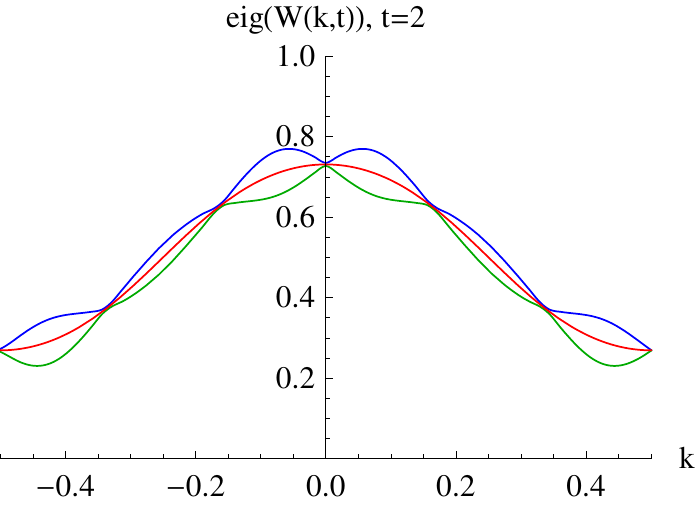}}
\caption{Two snapshots showing the convergence of the eigenvalues to the equilibrium state $W_\mathrm{FD}$ (red, same as Fig.~\ref{fig:KMS_middleT_same_mu_FD}) with $\mu_\uparrow = \mu_\downarrow = 1$.}
\label{fig:KMS_middleT_same_muEigConv}
\end{figure}

\medskip

\textit{Non-thermal stationary state.} For this example, we start from an (rather arbitrary) initial
\begin{multline}
\label{eq:initialW0nonThermal}
W(k,0) = \\
\frac{2}{5} \begin{pmatrix}
\frac12 \mathrm{e}^{-\cos(4\pi (k-\gamma))} + \frac14 & \frac14 \sin\!\left(\mathrm{e}^{2\pi i k}\right) \\ \frac14 \sin\!\left(\mathrm{e}^{-2\pi i k}\right) & \frac14 \mathrm{erf}(\cos(2\pi k)) + \frac12 + \arctan(\sin(2\pi k - \frac15)) + \frac{\pi}{4}\end{pmatrix}
\end{multline}
illustrated in Fig.~\ref{fig:stationary2W0} (where $\gamma$ is the Euler gamma constant), and then determine the stationary, non-thermal state $W_{\mathrm{st}}(k)$, via the $f$-function described in Sec.~\ref{sec:StationarySolutions}. Fig.~\ref{fig:stationary2fWst} illustrates both $f$ and $W_{\mathrm{st}}$. Next, we run the numerical simulation of the time evolution, which should converge to the predicted $W_{\mathrm{st}}(k)$. Fig.~\ref{fig:stationary2Conv} indeed verifies the convergence to $W_{\mathrm{st}}(k)$.

\begin{figure}[!ht]
\centering
\subfloat[initial $W(k,0)$]{
\includegraphics[width=0.5\textwidth]{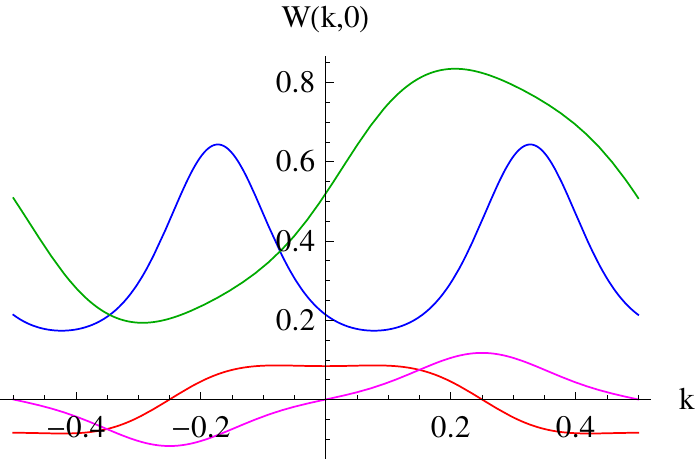}}
\subfloat[eigenvalues of $W(k,0)$]{
\includegraphics[width=0.5\textwidth]{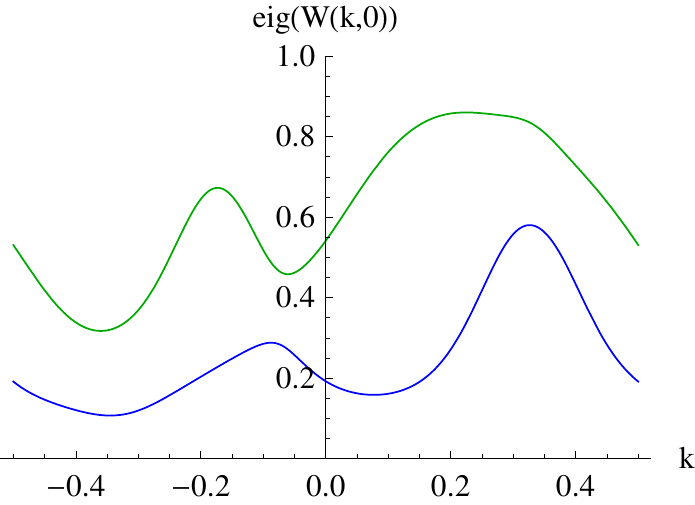}}
\caption{(a) Initial state $W(k,0)$ defined in~\eqref{eq:initialW0nonThermal}. The blue and green curves show the real diagonal entries, and the red and magenta curves the real and imaginary parts of the off-diagonal $\lvert\uparrow\rangle \langle\downarrow\rvert$ entry, respectively. (b) Eigenvalues of $W(k,0)$ in the interval $[0,1]$.}
\label{fig:stationary2W0}
\end{figure}

\begin{figure}[!ht]
\centering
\subfloat[]{
\includegraphics[width=0.5\textwidth]{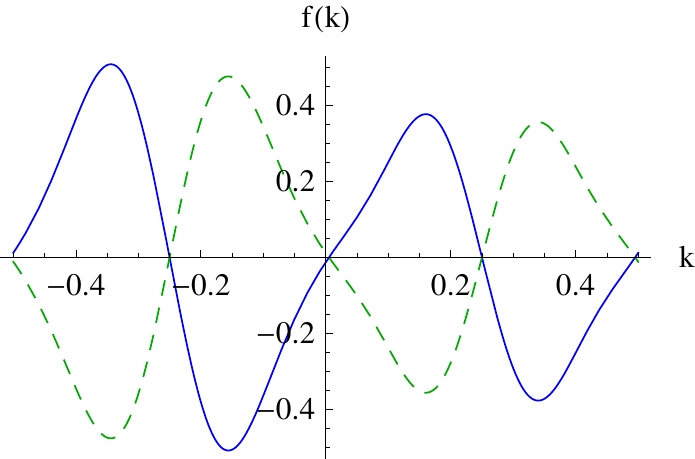}}
\subfloat[]{
\includegraphics[width=0.5\textwidth]{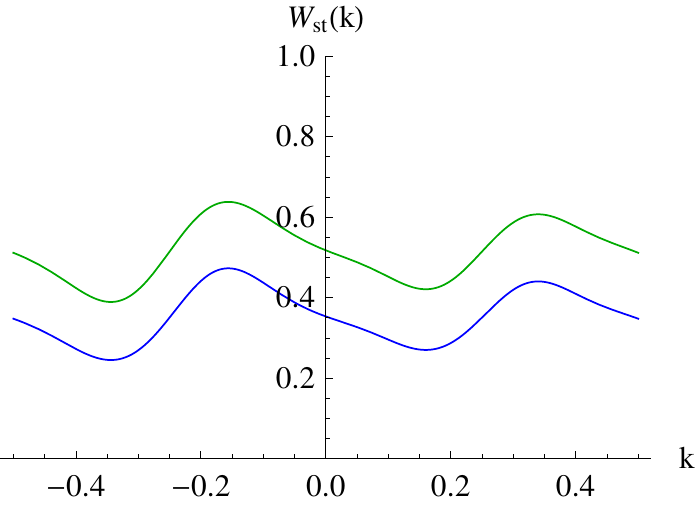}}\\
\caption{(a) The $f$-function (blue) calculated from $\mathrm{tr}[W(k,0)-W(\tfrac12-k,0)]$ (dashed) and the fitted ``chemical potentials'' $a_\uparrow = -0.617485$ and $a_\downarrow = 0.0578622$. The initial $W(k,0)$ is defined in~\eqref{eq:initialW0nonThermal}. (b) Resulting stationary state $W_\mathrm{st}(k)$~\eqref{eq:StationarySolutions} given by $f$ and $a_\uparrow$, $a_\downarrow$.}
\label{fig:stationary2fWst}
\end{figure}

\begin{figure}[!ht]
\centering
\subfloat[entropy convergence]{
\includegraphics[width=0.5\textwidth]{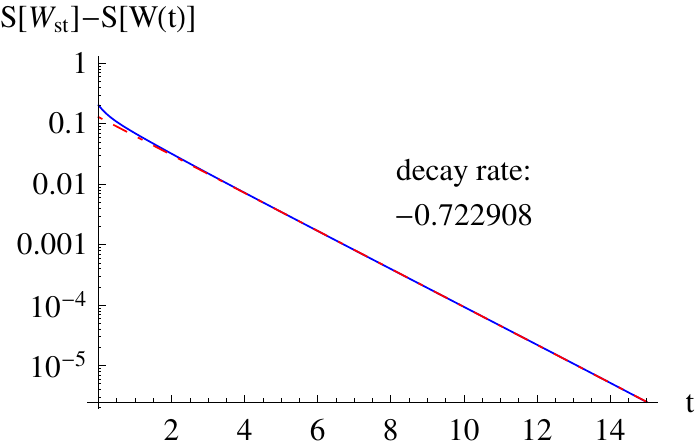}}
\subfloat[convergence in Hilbert-Schmidt norm]{
\includegraphics[width=0.5\textwidth]{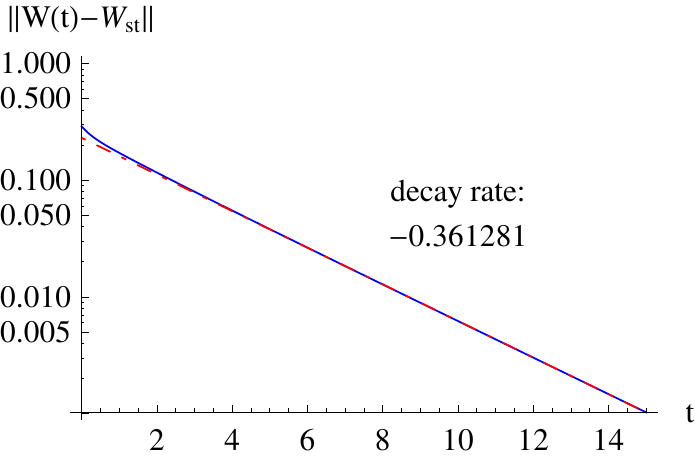}}
\caption{Convergence to the calculated $W_\mathrm{st}$ (Fig.~\ref{fig:stationary2fWst}) as semi-logarithmic plot.}
\label{fig:stationary2Conv}
\end{figure}

For special cases, we have checked that the asymptotic decay rate is almost independent of the initial conditions. This strongly suggests that the collision operator linearized at $W_\mathrm{st}$ has a spectral gap.

\clearpage

\section{Conclusions}
\label{sec:Conclusions}

The kinetic equation for the Hubbard model, in general, has two hardly investigated features (i) the Wigner function is $2 \times 2$ matrix-valued, (ii) the microscopic SU(2) invariance implies additional conservation laws. We investigated here the chain with nearest neighbor hopping, which is an integrable model, \cite{Essler2005}. The Boltzmann transport equation reflects integrability by an infinite number of conserved quantities and non-thermal stationary states. We established the H-theorem and classified all stationary states. Adding a next-nearest neighbor coupling seems to destroy all conservation laws beyond spin and energy which indicates that now the stationary solutions are exhausted by the thermal Fermi-Dirac Wigner functions.

In the spatially homogeneous case we observed numerically an exponentially fast convergence to the predicted stationary state, both for the diagonal and off-diagonal matrix elements with roughly comparable decay rates. The decay at low temperatures is slower than at high temperatures, as one would have expected. In principle, asymptotic decay rates can be computed from the linearized collision operator.

Physically of great interest would be to better understand the spatially inhomogeneous situation. For example one could imagine to have in each half of the chain a thermal state with the same temperature, but with different spin orientations. In principle, this could be handled by kinetic theory. One only would have to add in the kinetic equation the transport term $\omega'(k)\,\partial/\partial_x$. Numerically, such a problem is more demanding than the one studied here, but, at least in one dimension, still in reach. Another challenging problem would be to study energy transport through the chain. Our results point towards the validity of Fourier's law.

\appendix

\section{Characterization of stationary solutions}
\label{sec:AppendixStationary}

\begin{proposition}
\label{prop:sigmaWZero}
Let $\sigma[W]$ be as defined in~\eqref{eq:sigmaW}. If $0 < W < 1$, then the solutions to zero entropy production,
\begin{equation}
\sigma[W] = 0,
\end{equation}
are necessarily of the form~\eqref{eq:StationarySolutions}.
\end{proposition}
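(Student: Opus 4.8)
The plan is to exploit the manifestly nonnegative form~\eqref{eq:sigmaWfactorized} of the entropy production. Since $0 < W < 1$, every eigenvalue satisfies $\varepsilon_\sigma(k) \in (0,1)$, so all factors $\varepsilon_j,\tilde\varepsilon_j$ are strictly positive and the scalar inequality $(x-y)\log(x/y)\ge 0$ is strict unless $x=y$. Hence $\sigma[W]=0$ forces, for almost every point of the collision manifold $\{\underline k=0,\underline\omega=0\}$ and every spin assignment $\boldsymbol\sigma$,
\[
\big(\tilde\varepsilon_1\tilde\varepsilon_2\varepsilon_3\varepsilon_4 - \varepsilon_1\varepsilon_2\tilde\varepsilon_3\tilde\varepsilon_4\big)\,\big|\langle k_1\sigma_1|k_3\sigma_3\rangle\langle k_2\sigma_2|k_4\sigma_4\rangle - \langle k_1\sigma_1|k_4\sigma_4\rangle\langle k_2\sigma_2|k_3\sigma_3\rangle\big|^2 = 0 .
\]
Thus at each collision and each $\boldsymbol\sigma$ either the ``detailed-balance'' factor $D$ or the overlap factor $|M|^2$ must vanish, and I would read off this alternative separately on the three branches $\gamma_1$, $\gamma_2$, $\gamma_{\mathrm{diag}}$ identified in Sec.~\ref{sec:Collisions}.

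First I would use the branch $\gamma_1$ (where $k_3=k_1$ and $k_2=k_4$) to pin down the spin structure. Writing $p=k_1$, $r=k_4$ and choosing $\sigma_3=\bar\sigma_1$, $\sigma_4=\sigma_2$ one computes $D = \varepsilon_{\sigma_2}(r)\tilde\varepsilon_{\sigma_2}(r)\,(\varepsilon_{\bar\sigma_1}(p)-\varepsilon_{\sigma_1}(p))$, which is nonzero whenever $W(p)$ has non-degenerate spectrum. Then $M=0$ is forced, and since $\langle p\sigma_1|p\bar\sigma_1\rangle=0$ this reads $\langle p\sigma_1|r\tau\rangle\langle r\tau|p\bar\sigma_1\rangle=0$ for all $\sigma_1,\tau$. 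Hence each eigenvector $|r\tau\rangle$ of $W(r)$ is orthogonal to one eigenvector of $W(p)$, so parallel to the other: the eigenbases at $p$ and $r$ agree as unordered sets of directions. Running $p,r$ over all momenta (using continuity to fix the labeling and to cross the measure-zero degeneracy set) yields a single $k$-independent orthonormal basis $\{|\uparrow\rangle,|\downarrow\rangle\}$ with $W(k)=\sum_\sigma\varepsilon_\sigma(k)\,|\sigma\rangle\langle\sigma|$. This diagonalization is the matrix-specific heart of the argument, and controlling the degenerate points and the global consistency of the labeling is where I expect the real work to lie.

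With the basis fixed the overlaps collapse to $\langle k\sigma|k'\sigma'\rangle=\delta_{\sigma\sigma'}$ and the remaining constraints are scalar. Introducing $\psi_\sigma(k)=\log(\varepsilon_\sigma(k)/\tilde\varepsilon_\sigma(k))$, the condition $D=0$ becomes $\psi_{\sigma_1}(k_1)+\psi_{\sigma_2}(k_2)=\psi_{\sigma_3}(k_3)+\psi_{\sigma_4}(k_4)$. On $\gamma_2$ (where $k_2=k_3$, $k_4=k_1$) the configuration $(\uparrow,\downarrow,\uparrow,\downarrow)$ has $M\neq 0$ and gives $\psi_\uparrow(k_1)-\psi_\downarrow(k_1)=\psi_\uparrow(k_3)-\psi_\downarrow(k_3)$, so $\psi_\uparrow-\psi_\downarrow\equiv a_\uparrow-a_\downarrow$ is constant. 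On $\gamma_{\mathrm{diag}}$ (where $k_2=\tfrac12-k_1$, $k_4=\tfrac12-k_3$) the same configuration again has $M\neq 0$ and yields that $\psi_\uparrow(k)+\psi_\downarrow(\tfrac12-k)$ is independent of $k$.

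Finally I would combine the two scalar constraints. Setting $f(k):=a_\uparrow-\psi_\uparrow(k)$, the first relation gives $\psi_\downarrow(k)=a_\downarrow-f(k)$, while the second, after absorbing its additive constant into the $a_\sigma$, forces $f(k)+f(\tfrac12-k)=0$. Inverting $\psi_\sigma=a_\sigma-f$ returns $\varepsilon_\sigma(k)=(\mathrm{e}^{f(k)-a_\sigma}+1)^{-1}$, which is precisely the form~\eqref{eq:StationarySolutions}. The only genuinely delicate points are the passage to the common eigenbasis (degeneracies and continuity) and the justification that the almost-everywhere vanishing of the integrand may be evaluated on each smooth branch $\gamma_1,\gamma_2,\gamma_{\mathrm{diag}}$ separately; everything else is bookkeeping.
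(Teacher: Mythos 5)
Your proof is correct and shares the paper's overall skeleton: pointwise vanishing of the non-negative symmetrized integrand~\eqref{eq:sigmaWfactorized} on the collision branches, collision invariants $\psi_\sigma = \log(\varepsilon_\sigma/\tilde{\varepsilon}_\sigma) = -\Phi_\sigma$, and the two scalar functional equations from $\gamma_2$ and $\gamma_{\mathrm{diag}}$ with $\boldsymbol{\sigma} = \uparrow\downarrow\uparrow\downarrow$. The genuine difference is how you establish the $k$-independence of the eigenbasis. The paper first derives $\Phi_\uparrow - \Phi_\downarrow \equiv c$ on $\gamma_2$ (which requires the global labeling convention~\eqref{eq:nperp}), then splits into $c=0$ (where $W$ is scalar) and $c \neq 0$ (where $\gamma_2$ with $\uparrow\uparrow\downarrow\downarrow$ forces $G = 0$, hence $\langle k_1,\uparrow|k_2,\downarrow\rangle = 0$). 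You instead read the mixed configuration $(\sigma_1,\sigma_2,\bar{\sigma}_1,\sigma_2)$ off $\gamma_1$, where indeed $D = \varepsilon_{\sigma_2}(r)\tilde{\varepsilon}_{\sigma_2}(r)\bigl(\varepsilon_{\bar{\sigma}_1}(p) - \varepsilon_{\sigma_1}(p)\bigr)$ is nonzero at any non-degenerate $p$, forcing each $\lvert r,\tau\rangle$ to be parallel to one of the eigenvectors at $p$. This is valid and in some respects cleaner: it needs no global labeling convention, and the degeneracy worry you flag is lighter than you fear --- you need only a single non-degenerate reference point $p_0$ against which every $r$ is compared (no continuity required); if no such $p_0$ exists then $W = \varepsilon(k)\,\mathbbm{1}$ and the basis is free. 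One caveat: the paper asserts that on $\gamma_1$ one has $F=0$ and no information can be extracted; that is true only for the aligned configurations $\sigma_3=\sigma_1$, $\sigma_4=\sigma_2$. Your $\gamma_1$ constraint is genuinely non-trivial, but under the symmetrization $k_3 \leftrightarrow k_4$ it is equivalent to a $\gamma_2$ constraint with spins $(\sigma_1,\sigma_2,\sigma_2,\bar{\sigma}_1)$, so there is no contradiction --- you are simply exploiting spin configurations the paper does not use. The remaining steps (separation of variables, absorbing the additive constant so that $f(k) = -f(\tfrac12-k)$, inverting $\psi_\sigma$ to recover~\eqref{eq:StationarySolutions}) match the paper exactly.
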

\noindent\textit{Remark.} As noted by J.~Lukkarinen, further zero entropy and stationary solutions are obtained by setting one eigenvalue of $W$ identically $= 0, 1$, and the other eigenvalue arbitrary.

\begin{proof}
On the one hand, if $W$ is of the form~\eqref{eq:StationarySolutions}, then $\sigma[W] = 0$ follows by inserting. On the other hand, let $\sigma[W] = 0$. We set $\boldsymbol{k} = (k_1,k_2,k_3,k_4)$, $\boldsymbol{\sigma} = (\sigma_1,\sigma_2,\sigma_3,\sigma_4)$, $\mathrm{d}^4 \boldsymbol{k} = \mathrm{d}k_1\, \mathrm{d}k_2\, \mathrm{d}k_3\, \mathrm{d}k_4$ and define
\begin{equation}
F(\boldsymbol{k},\boldsymbol{\sigma}) 
	= \left( \tilde{\varepsilon}_1 \tilde{\varepsilon}_2 \varepsilon_3 \varepsilon_4 - 
		\varepsilon_1 \varepsilon_2 \tilde{\varepsilon}_3 \tilde{\varepsilon}_4 \right) 
		\log \left( \frac{\tilde{\varepsilon}_1 \tilde{\varepsilon}_2 \varepsilon_3 \varepsilon_4}
			{\varepsilon_1 \varepsilon_2 \tilde{\varepsilon}_3 \tilde{\varepsilon}_4 } \right) \ge 0
\end{equation}
with the $\varepsilon_i$ as in Sec.~\ref{sec:Properties}, furthermore
\begin{equation}
\label{eq:Gdefinition}
G(\boldsymbol{k},\boldsymbol{\sigma}) 
	= \big\lvert \langle k_1,\sigma_1 | k_3,\sigma_3 \rangle \langle k_2,\sigma_2 | k_4,\sigma_4 \rangle 
		- \langle k_1,\sigma_1 | k_4,\sigma_4 \rangle \langle k_2,\sigma_2 | k_3,\sigma_3 \rangle \big\rvert^2.
\end{equation}
Then
\begin{equation}
\sigma[W] = \frac{\pi}{4} \int_{\mathbb{T}^4} \mathrm{d}^4 \boldsymbol{k} \, \delta(\underline{k}) \delta(\underline{\omega})
	\sum_{\boldsymbol{\sigma}} F(\boldsymbol{k},\boldsymbol{\sigma}) G(\boldsymbol{k},\boldsymbol{\sigma}) 	
\end{equation}
according to~\eqref{eq:sigmaWfactorized}. Since all terms are non-negative,
\begin{equation}
\label{eq:FGprodZero}
F(\boldsymbol{k},\boldsymbol{\sigma}) \, G(\boldsymbol{k},\boldsymbol{\sigma}) = 0
\end{equation}
must hold for all $\boldsymbol{\sigma}$ and all $\boldsymbol{k} \in \gamma_2 \cup \gamma_{\mathrm{diag}}$ (see Fig.~\ref{fig:OmegaEcons}). On $\gamma_1$ one has $F = 0$ and no extra information can be extracted.

$F$ has the structure $(x-y)\log(\tfrac{x}{y})$, which is zero only iff $x = y$, equivalently iff
\begin{equation}
\label{eq:LogDiff}
\log\left(\frac{\tilde{\varepsilon}_1 \tilde{\varepsilon}_2 \varepsilon_3 \varepsilon_4}
	{\varepsilon_1 \varepsilon_2 \tilde{\varepsilon}_3 \tilde{\varepsilon}_4}\right) = 
		\log\left( \frac{\tilde{\varepsilon}_1}{\varepsilon_1} \right)
		+ \log\left( \frac{\tilde{\varepsilon}_2}{\varepsilon_2} \right)
		- \log\left( \frac{\tilde{\varepsilon}_3}{\varepsilon_3} \right)
		- \log\left( \frac{\tilde{\varepsilon}_4}{\varepsilon_4} \right) = 0.
\end{equation}
Defining the \emph{collision invariants} as
\begin{equation} \label{eq:PhiDefinition}
\Phi_{\sigma}(k) = \log\left( \frac{\tilde{\varepsilon}_\sigma(k)}{\varepsilon_\sigma(k)} \right),
\end{equation}
condition~\eqref{eq:LogDiff} reads
\begin{equation} \label{eq:CollisionInvariants}
\Phi_{\sigma_1}(k_1) + \Phi_{\sigma_2}(k_2) = \Phi_{\sigma_3}(k_3) + \Phi_{\sigma_4}(k_4).
\end{equation}

Note that the labeling of eigenvalues $\varepsilon_\uparrow(k)$, $\varepsilon_\downarrow(k)$ and corresponding eigenvectors is arbitrary. Thus w.l.o.g.\ we can assume that
\begin{equation}
\label{eq:nperp}
\langle k_1,\uparrow |\, k_2,\uparrow \rangle \neq 0 \quad\text{and thus}\quad \langle k_1,\downarrow |\, k_2,\downarrow \rangle \neq 0
\end{equation}
for all $k_1, k_2 \in \mathbbm{T}$.

Consider the contour $\gamma_2$ ($k_1 = k_4$, $k_2 = k_3$) for $\boldsymbol{\sigma} = \uparrow\downarrow\uparrow\downarrow$. In this case, the second term on the right side of~\eqref{eq:Gdefinition} vanishes, and thus
\begin{equation}
G(\boldsymbol{k},\uparrow\downarrow\uparrow\downarrow) = \left\lvert \langle k_1,\uparrow |\, k_2,\uparrow\rangle \langle k_1,\downarrow |\, k_2,\downarrow\rangle \right\rvert^2 > 0
\end{equation}
by construction~\eqref{eq:nperp}. Therefore~\eqref{eq:FGprodZero} forces $F(\boldsymbol{k},\uparrow\downarrow\uparrow\downarrow) = 0$ on $\gamma_2$. Equation~\eqref{eq:CollisionInvariants} becomes after rearranging terms
\begin{equation}
\label{eq:FzeroSeparation}
\Phi_{\uparrow}(k_1) - \Phi_{\downarrow}(k_1) = \Phi_{\uparrow}(k_2) - \Phi_{\downarrow}(k_2).
\end{equation}
Since variables are separated, both sides of~\eqref{eq:FzeroSeparation} must be constant, i.e.,
\begin{equation} \label{eq:PhiUpDownConstant}
\Phi_{\uparrow}(k) - \Phi_{\downarrow}(k) = c
\end{equation}
for a fixed $c \in \mathbb{R}$ and all $k \in \mathbb{T}$.

Next, we establish that the basis $\lvert k,\sigma\rangle$ has to be $k$-independent up to a $k$-dependent phase, which can be chosen such that $\lvert k,\sigma\rangle = \lvert\sigma\rangle$ with $\lvert\uparrow\rangle, \lvert\downarrow\rangle$ a fixed basis in $\mathbb{C}^2$. If $c = 0$ in~\eqref{eq:PhiUpDownConstant}, then $\Phi_{\uparrow} = \Phi_{\downarrow}$, and it follows that $W(k) = \varepsilon(k)\,\mathbbm{1}$. In particular, one can set $\lvert k,\sigma\rangle = \lvert\sigma\rangle$. In the other case, $c \neq 0$, consider the contour $\gamma_2$ for $\boldsymbol{\sigma} = \uparrow\uparrow\downarrow\downarrow$:
\begin{equation}
F(\boldsymbol{k},\uparrow\uparrow\downarrow\downarrow) = \Phi_{\uparrow}(k_1) + \Phi_{\uparrow}(k_2) - \Phi_{\downarrow}(k_2) - \Phi_{\downarrow}(k_1) = 2 c \neq 0,
\end{equation}
where we have used~\eqref{eq:PhiUpDownConstant} for the second equality. Thus \eqref{eq:FGprodZero} requires that $G(\boldsymbol{k},\uparrow\uparrow\downarrow\downarrow) = 0$ on $\gamma_2$. Inserted into the definition~\eqref{eq:Gdefinition} yields
\begin{equation}
\label{eq:kProdPerp}
\langle k_1,\uparrow |\, k_2,\downarrow\rangle \langle k_2,\uparrow |\, k_1,\downarrow\rangle = 0
\end{equation}
for all $k_1, k_2 \in \mathbb{T}$. Since the vectors $\lvert k,\uparrow\rangle$ and $\lvert k,\downarrow\rangle$ are an orthonormal basis of $\mathbb{C}^2$ for each fixed $k$, \eqref{eq:kProdPerp} is equivalent to 
\begin{equation}
\label{eq:kPerp}
\langle k_1,\uparrow |\, k_2,\downarrow\rangle = 0
\end{equation}
for all $k_1, k_2 \in \mathbb{T}$. Keeping $k_2$ fixed, this means that $\lvert k_1,\uparrow\rangle = \mathrm{const}$ up to a phase, and similarly $\lvert k_1,\downarrow\rangle = \mathrm{const}$. W.l.o.g.\ the phase factor can be set to $1$, leaving invariant the projectors $P_\sigma(k) = \lvert k,\sigma\rangle\langle k,\sigma\rvert$. In summary, $\lvert k,\sigma\rangle = \lvert\sigma\rangle$ and $G(\boldsymbol{k},\boldsymbol{\sigma}) = G(\boldsymbol{\sigma})$.

As final step, consider $\gamma_\mathrm{diag}$ for $\boldsymbol{\sigma} = \uparrow\downarrow\uparrow\downarrow$. By direct inspection $G(\uparrow\downarrow\uparrow\downarrow) = 1$, thus \eqref{eq:FGprodZero} requires $F(\boldsymbol{k},\uparrow\downarrow\uparrow\downarrow) = 0$. \eqref{eq:CollisionInvariants} for $k_2 = \tfrac12 - k_1$ and $k_4 = \tfrac12 - k_3$ becomes
\begin{equation} \label{eq:PhiDiag}
\Phi_{\uparrow}(k_1) + \Phi_{\downarrow}(\tfrac12 - k_1) = \Phi_{\uparrow}(k_3) + \Phi_{\downarrow}(\tfrac12 - k_3).
\end{equation}
Since variables are separated, both sides must be constant, i.e.,
\begin{equation}
\Phi_{\uparrow}(k) + \Phi_{\downarrow}(\tfrac{1}{2}-k) = \mathrm{const}
\end{equation}
for all $k \in \mathbb{T}$. Combined with~\eqref{eq:PhiUpDownConstant}, we obtain
\begin{equation}
\Phi_{\sigma}(k) + \Phi_{\sigma}(\tfrac{1}{2}-k) = \mathrm{const}
\end{equation}
for $\sigma = \uparrow, \downarrow$. One concludes that $\Phi_\sigma$ is necessarily of the form
\begin{equation} \label{eq:Phi_fsigma}
\Phi_{\sigma}(k) = f_\sigma(k) - a_\sigma \quad \text{with} \quad f_\sigma(k) = -f_\sigma(\tfrac12 - k)
\end{equation}
for all $k \in \mathbb{T}$ and some $a_\sigma \in \mathbb{R}$. Plugging into~\eqref{eq:PhiUpDownConstant}, one deduces that $f_\uparrow(k) - f_\downarrow(k) = \mathrm{const}$ and, since $f_\sigma(\tfrac14) = 0$, it follows that $f_\uparrow(k) = f_\downarrow(k) = f(k)$ independent of $\sigma$. Summarizing, we arrive at
\begin{equation} \label{eq:Phi_f}
\Phi_{\sigma}(k) = f(k) - a_\sigma.
\end{equation}
Solving~\eqref{eq:Phi_f} and~\eqref{eq:PhiDefinition} for $\varepsilon_\sigma(k)$ leads to the claimed form~\eqref{eq:StationarySolutions}.
\end{proof}

\begin{corollary}
Under the constraint $0 < W < 1$, all stationary solutions, i.e., all solutions to $\mathcal{C}[W] = 0$, are precisely of the form~\eqref{eq:StationarySolutions},
\begin{equation}
\label{eq:StationarySolutions2}
W_{\mathrm{st}}(k) = \sum_{\sigma \in \{\uparrow,\downarrow\}} \lambda_\sigma(k) \, \lvert\sigma\rangle\langle\sigma\rvert, \quad \lambda_\sigma(k) = \left( \mathrm{e}^{f(k) - a_\sigma} + 1 \right)^{-1}
\end{equation}
with $f(k) = - f(\tfrac{1}{2} - k)$ for all $k \in \mathbb{T}$.
\end{corollary}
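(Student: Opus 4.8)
The plan is to sandwich the stationary set between two families and show all three coincide. Let $\mathcal{S}=\{W:\mathcal{C}[W]=0\}$, let $\mathcal{Z}=\{W:\sigma[W]=0\}$, and let $\mathcal{F}$ denote the set of $W$ of the form~\eqref{eq:StationarySolutions}. The standing hypothesis is $0<W<1$, so Proposition~\ref{prop:sigmaWZero} applies. The scheme is $\mathcal{S}\subseteq\mathcal{Z}=\mathcal{F}\subseteq\mathcal{S}$: the first inclusion is immediate, the middle equality is exactly Proposition~\ref{prop:sigmaWZero} (together with the insertion check $\mathcal{F}\subseteq\mathcal{Z}$), and the last inclusion is the only genuine piece of work.

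For the first inclusion I use that $\mathcal{C}[W]=0$ forces $\sigma[W]=0$, as already noted in Sec.~\ref{sec:Properties} from $\sigma[W]=-\int \mathrm{tr}[(\log W-\log\tilde W)\,\mathcal{C}[W]]$. Proposition~\ref{prop:sigmaWZero} then shows that any such $W$, under $0<W<1$, is necessarily of the form~\eqref{eq:StationarySolutions}; conversely, inserting~\eqref{eq:StationarySolutions} gives $\sigma[W]=0$. This settles the middle equality and reduces the corollary to proving that \emph{every} $W$ of the form~\eqref{eq:StationarySolutions} actually satisfies $\mathcal{C}[W]=0$, not merely $\sigma[W]=0$. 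This converse is not automatic: the conservative part $\mathcal{C}_{\mathrm c}$ drops out of $\sigma[W]$ entirely (it is a commutator, and $\log W-\log\tilde W$ commutes with $W$), and $\sigma[W]=0$ controls only the scalar combination $F$ of the dissipative part, not the full matrix $\mathcal{C}_{\mathrm d}[W]$.

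The decisive structural input is that in~\eqref{eq:StationarySolutions} the eigenbasis $\lvert\sigma\rangle$ is $k$-independent, so the whole family $\{W(k)\}_{k\in\mathbb T}$ is simultaneously diagonal and all $W(k)$ commute. For the Vlasov term this immediately gives $\mathcal{C}_{\mathrm c}[W]=0$: the integrand of $H_{\mathrm{eff}}$ in~\eqref{eq:Heff} is a combination of products and traces of the $W_j$, hence $H_{\mathrm{eff}}(k)$ is again diagonal in $\lvert\sigma\rangle$ and $[H_{\mathrm{eff}}(k),W(k)]=0$. For the dissipative term I would substitute the diagonal ansatz into~\eqref{eq:Aquad}--\eqref{eq:Atr}; the overlaps collapse to $\langle k_i,\sigma_i|k_j,\sigma_j\rangle=\delta_{\sigma_i\sigma_j}$, so $\mathcal{A}[W]_{1234}+\mathcal{A}[W]_{1234}^*$ is diagonal and the surviving spin patterns are exactly the Hubbard $\uparrow\downarrow\to\uparrow\downarrow$ channels, i.e.\ $\{\sigma_1,\sigma_2\}=\{\uparrow,\downarrow\}=\{\sigma_3,\sigma_4\}$.

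It then remains to check that the diagonal, spin-coupled dissipative integrand vanishes on the collision support. Writing $\tilde\varepsilon_{j\sigma}/\varepsilon_{j\sigma}=\mathrm{e}^{f(k_j)-a_\sigma}$, the gain-minus-loss cancellation reduces to the detailed-balance identity, which is~\eqref{eq:CollisionInvariants} with $\Phi_\sigma(k)=f(k)-a_\sigma$. Two facts make it hold: the selection rule $\{\sigma_1,\sigma_2\}=\{\sigma_3,\sigma_4\}$ makes the $a_\sigma$-contributions match ($a_\uparrow+a_\downarrow$ on both sides), while the antisymmetry $f(k)=-f(\tfrac12-k)$ handles the $f$-contributions: on $\gamma_2$ one has $k_1=k_4,\,k_2=k_3$ so $f(k_1)+f(k_2)=f(k_3)+f(k_4)$ trivially, and on $\gamma_{\mathrm{diag}}$ one has $k_2=\tfrac12-k_1,\,k_4=\tfrac12-k_3$ so $f(k_1)+f(k_2)=0=f(k_3)+f(k_4)$. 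Since $\gamma_1$ contributes nothing (Sec.~\ref{sec:Collisions}), the integrand vanishes identically on the support, giving $\mathcal{C}_{\mathrm d}[W]=0$ and hence $\mathcal{C}[W]=0$. I expect the main obstacle to be purely bookkeeping: confirming that the diagonal entries of the full matrix $\mathcal{A}[W]_{1234}+\mathcal{A}[W]_{1234}^*$ really collapse to the balanced gain-minus-loss form for each admissible spin pattern — a tedious but routine expansion entirely parallel to the verification $\mathcal{C}[W_\mathrm{FD}]=0$.
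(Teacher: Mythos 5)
Your proposal is correct and follows essentially the same route as the paper: the converse direction is exactly $\mathcal{C}[W]=0\Rightarrow\sigma[W]=0\Rightarrow$ Proposition~\ref{prop:sigmaWZero}, and the forward direction is the same insertion check (the commutator vanishes because $H_\mathrm{eff}$ is diagonal in the fixed basis, and $\mathcal{C}_\mathrm{d}$ vanishes via detailed balance on $\gamma_2$ and $\gamma_\mathrm{diag}$ using $f(k)=-f(\tfrac12-k)$). You merely spell out in more detail the spin selection rule and the collapse of $\mathcal{A}_\mathrm{quad}+\mathcal{A}_\mathrm{tr}$ to the balanced gain-minus-loss form, which the paper leaves as a brief remark.
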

\begin{proof}
Each $W_\mathrm{st}$ of the form~\eqref{eq:StationarySolutions2} satisfies $\mathcal{C}[W] = 0$, which can be checked by inserting $W_\mathrm{st}$ into $\mathcal{C}[W]$: specifically, the commutator~\eqref{eq:Cc} defining $C_{\mathrm{c}}[W_\mathrm{st}]$ vanishes since $H_\mathrm{eff}$ and $W_\mathrm{st}$ are diagonal. The dissipative collision operator $C_{\mathrm{d}}[W_\mathrm{st}]$ is zero due to the symmetry properties of $\gamma_{\mathrm{diag}}$ and the fact that $f(k) = - f(\tfrac{1}{2} - k)$.
On the other hand, let $W$ be a solution of $\mathcal{C}[W] = 0$. Then
\begin{equation}
\frac{\partial}{\partial t} W(k,t) = \mathcal{C}[W](k,t) = 0,
\end{equation}
and, in particular,
\begin{equation}
\sigma[W] = \frac{\mathrm{d}}{\mathrm{d} t} S[W] = 0.
\end{equation}
According to Proposition~\ref{prop:sigmaWZero}, $W$ is of the form~\eqref{eq:StationarySolutions2}.
\end{proof}


\end{document}